\newif\ifstoc
\newcommand{\stocoption}[2]{{\ifstoc #1 \else #2 \fi}}
\definecolor{Darkblue}{rgb}{0,0,0.4}
\definecolor{Brown}{cmyk}{0,0.81,1.,0.60}
\definecolor{Purple}{cmyk}{0.45,0.86,0,0}
\newcommand{\mydriver}{hypertex}
 \renewcommand{\mydriver}{pdftex}
\newcommand{\lref}[2][]{\hyperref[#2]{#1~\ref*{#2}}}
\newtheorem{theorem}{Theorem}[section]
\newtheorem{lemma}[theorem]{Lemma}
\newtheorem{fact}[theorem]{Fact}
\newtheorem{corollary}[theorem]{Corollary}
\numberwithin{algorithm}{section}
\newenvironment{proof}{

\noindent{\bf Proof:}}
{\hfill$\blacksquare$

}
\newcommand{\junk}[1]{}
\newcommand{\ignore}[1]{}
\newcommand{\Z}[0]{{\ensuremath{\mathbb{Z}}}}
\def\ceil#1{\lceil #1 \rceil}
\newcommand{\sse}{\subseteq}
\newcommand{\B}{{\mathscr{B}}}
\newcommand{\C}{{\mathscr{C}}}
\newcommand{\G}{{\mathscr{G}}}
\newcommand{\sZ}{{\mathcal{Z}}}
\renewcommand{\S}{{\mathscr{S}}}
\newcommand{\symdif}{\triangle}
\newcommand{\MST}{\ensuremath{\mathsf{mst}\xspace}}
\newcommand{\OPT}{\ensuremath{\mathsf{opt}\xspace}}
\newcommand{\cost}{\ensuremath{\mathsf{cost}\xspace}}
\newcounter{note}[section]
\newcommand{\qedsymb}{\hfill{\rule{2mm}{2mm}}}
\renewenvironment{proof}{\begin{trivlist} \item[\hspace{\labelsep}{\bf
\noindent Proof.\/}] }{\qedsymb\end{trivlist}}%
\newcommand{\initOneLiners}{%
    \setlength{\itemsep}{0pt}
    \setlength{\parsep }{0pt}
    \setlength{\topsep }{0pt}
}
\newenvironment{OneLiners}[1][\ensuremath{\bullet}]
    {\begin{list}
        {#1}
        {\initOneLiners}}
    {\end{list}}
\newcommand{\squishlist}{
 \begin{list}{$\bullet$}
  { \setlength{\itemsep}{0pt}
     \setlength{\parsep}{3pt}
     \setlength{\topsep}{3pt}
     \setlength{\partopsep}{0pt}
     \setlength{\leftmargin}{1.5em}
     \setlength{\labelwidth}{1em}
     \setlength{\labelsep}{0.5em} } }
\newcommand{\squishend}{
  \end{list}  }
\newcommand{\cluster}[2]{\smash{{\C}^{(#1)}_{#2}}}
\newcommand{\vx}[1]{{V({#1})}}
\newcommand{\tr}[1]{T({#1})}
\begin{document}

\title{Online Steiner Tree with Deletions}

\author{
Anupam Gupta\thanks{Computer Science Department, Carnegie Mellon
    University, Pittsburgh, PA 15213, USA, and Microsoft Research SVC,
    Mountain View, CA 94043. Supported in part by
    NSF awards CCF-0964474 and CCF-1016799, and by the
    CMU-MSR Center for Computational Thinking.}
\and Amit Kumar\thanks{Dept. of Computer Science and Engg., IIT Delhi,
  India 110016.}
}
\date{}
\maketitle

\stocoption{}{
\thispagestyle{empty}
\setcounter{page}{0}
}

\begin{abstract}
  In the online Steiner tree problem, the input is a set of vertices
  that appear one-by-one, and we have to maintain a Steiner tree on the
  current set of vertices. The cost of the tree is the total length of
  edges in the tree, and we want this cost to be close to the cost of
  the optimal Steiner tree at all points in time. If we are allowed to
  only add edges, a tight bound of $\Theta(\log n)$ on the
  competitiveness has been known for two decades. Recently it was shown
  that if we can add one new edge and make one edge swap upon every
  vertex arrival, we can still maintain a constant-competitive tree
  online.

  But what if the set of vertices sees both additions and deletions?
  Again, we would like to obtain a low-cost Steiner tree with as few
  edge changes as possible.  The original paper of Imase and Waxman
  (\emph{SIAM J.~Disc.~Math, 4(3):369--384, 1991}) had also considered
  this model, and it gave an algorithm that made at most $O(n^{3/2})$
  edge changes for the first $n$ requests, and maintained a
  constant-competitive tree online. In this paper we improve on these
  results:

  \begin{itemize}
  \item We give an online algorithm that maintains a Steiner tree under
    only deletions: we start off with a set of vertices, and at each
    time one of the vertices is removed from this set---our
    Steiner tree no longer has to span this vertex. We give an algorithm
    that changes only a constant number of edges upon each request, and
    maintains a constant-competitive tree at all times. Our algorithm
    uses the primal-dual framework and a global charging argument to
    carefully make these constant number of changes.
  \item We also give an algorithm that maintains a Steiner tree in the
    fully-dynamic model (where each request either adds or deletes a
    vertex). Our algorithm for this setting makes a constant number of
    changes per request in an amortized sense.
  \end{itemize}
\end{abstract}


\section{Introduction}

The online Steiner tree problem needs little introduction: we have an
underlying metric space,
one-by-one some points in the metric space get
designated as vertices,  and we have to maintain a Steiner tree on the
current set of vertices. The cost of the tree is the total length of
edges in the tree, and we want that at each  time $t$ this cost
stays close to the cost of the optimal Steiner tree on the first $t$
vertices. If we are allowed only to add edges, a tight bound of
$\Theta(\log n)$ on the competitiveness is known~\cite{IW91}. In fact,
this paper considered the ``dynamic'' version of the problem as well,
and asked what would happen if we were allowed to change the Steiner
tree along the way, swapping a small number of previously added edges
for shorter non-tree edges in order to decrease the cost. The rationale
for this problem was natural: much as in dynamic data structures, it
seems natural to ``rewire'' the tree over time, as long as the overhead
in terms of the number of rewirings is not too high, and there is
considerable benefit in terms of cost.

Imase and Waxman~\cite{IW91} showed that one could maintain a constant-competitive
Steiner tree while only making $t^{3/2}$ changes in the first $t$
time-steps; this is a considerable improvement over the naive bound of
$O(t^2)$ obtained by just recomputing the tree at each time-step. More
recently, Megow et al.~\cite{MSVW12} made a significant improvement, showing
that $O(t)$ changes were enough to maintain a constant-competitive tree.
Hence the number of changes required is an \emph{amortized} constant,
improving on the previous $t^{1/2}$-amortized bound.  This was improved
in~\cite{GuGK13} to showing that, in fact, one could make a single edge
swap upon every vertex arrival (in addition to the new edge being added)
--- i.e., a non-amortized bound --- and still achieve
constant-competitiveness.

But what if the set of vertices can undergo both additions and
deletions?  Again, we would like to maintain a low-cost Steiner tree
with as few edge changes as possible. Imase and Waxman~\cite{IW91}
considered this model too; their $t^{3/2}$-change algorithm really is
presented for this more general case where both additions and deletions
take place (the ``fully-dynamic'' case). Very recently~\cite{LOPSZ13}
gave an $O(\log \Delta)$-change algorithm for the fully-dynamic case,
where $\Delta$ is the ratio of the maximum to minimum distance in the
metric. 

\subsection{Our Results}
\label{sec:our-results}

We first consider online algorithms that maintain a Steiner tree under
deletions only. In this model, we start off with a set of vertices, and
at each time one of the vertices is deleted from the this set by an
adversary.  This means our Steiner tree no longer has to span this
vertex --- though we are free to use this deleted vertex in our tree if
we like. In fact we are allowed to maintain any subgraph of the original
graph, as long as all the current vertices lie in a single connected
component of our subgraph. We pay the cost of this subgraph, and want
this cost to be comparable to that of the optimal Steiner tree on the
vertex set. Getting an amortized bound showing a constant number of
changes for this deletions-only case is not difficult (and is
implicit in the~\cite{IW91} paper).  Our first result is an algorithm
that changes only a \emph{non-amortized} constant number of edges upon
each deletion request, and maintains a constant-competitive tree at all
times.

Then we consider the fully-dynamic model, where each request either adds
or deletes a vertex from the current set. For this setting, we give an
algorithm that again maintains a constant-competitive tree, but now the
algorithm makes an \emph{amortized} constant number of changes per
request; i.e., for all $t$, it makes $O(t)$ changes in the first $t$
steps. Getting an algorithm with a non-amortized constant bound for this
setting remains a open problem.

\subsubsection{Our Techniques}
\label{sec:techniques}

In order to get an algorithm that makes only a small number of changes
in each step, one natural idea would be that if a vertex is deleted, and
it happens to be a high degree vertex, we just keep it around. This
might be fine because there cannot be too many high degree non-alive
vertices (a vertex is called non-alive if it has been deleted). And if a low degree
 vertex gets deleted, then we remove all
edges incident to it and update the tree (this can only change the tree
by a constant number of edges, proportional to the degree). Of course,
this still requires arguing that the extra cost to retain the
high-degree vertices and their incident edges is small. But there is a
bigger issue --- the above description is inherently incomplete because
deletion of a low degree vertex can cause other high degree non-alive vertices in
the tree to become low degree vertices. Hence, we might not be able to
argue that the tree has only a small number of non-alive low-degree vertices.

Our algorithm for the deletion-only case is essentially a primal-dual
one, and though we state and analyze it purely combinatorially, the
primal-dual viewpoint will be useful to state the intuition. We run a
primal-dual algorithm for Steiner tree at each time-step. However if a
vertex is deleted, we cannot afford to remove it from the primal-dual
process, because this might change the moat-growing and edge-addition in
very unpredictable ways. On the other hand, we cannot afford to keep it
around either: growing a dual moat around it generates ``fake'' duals
--- the dual we generate in this way is not feasible for the cut-packing
dual --- and hence we have to be careful if we want to charge against
this dual. First, for an amortized bound: we use the fact that if a
majority of the dual moats at each point in time are ``real'', we can
globally charge the fake dual to these real duals. If the majority of
these dual moats are ``fake'', then we can afford to make drastic
changes in the Steiner tree -- we can get rid of the vertices in these
fake dual moats, and run the primal-dual process afresh from this point
onwards.  The new edges added can be charged to the deletions
corresponding to the vertices in these fake dual moats.  Thus, we shall
ensure that at any point of time, at least half of the dual moats are
real.

To get a non-amortized bound, we need some more ideas. We will not be
able to make sure that at all times a constant fraction of the dual
moats are real. Instead we will ensure this in a global manner.  We will
show that if such a property does not hold, there will be many fake
moats which can be removed without requiring the tree to change in too
many edges.

For the fully-dynamic case, we use the same greedy algorithm as in the
work of~\cite{IW91}: if there is a tree edge $e$ and a non-tree edge $f$
such that $T - e + f$ is still a spanning tree, we swap $(e,f)$. To
handle deletions, if there is a deleted vertex that has degree one or
two in the current tree, we remove such a low-degree vertex from the
tree; we keep rest of the deleted vertices in our Steiner tree. We show
via a potential function argument (which extends a potential function
from~\cite{GuGK13}) that the number of edge changes performed in $t$
steps is only $O(t)$. This gives the claimed amortized constant-change
constant-competitive algorithm.

\subsection{Other Related Work}
\label{sec:related-work}

The primal-dual method has been used extensively in offline algorithm
design.  Its use for Steiner forest was pioneered by Agrawal, Klein, and
Ravi~\cite{AKR95}, and extended by Goemans and
Williamson~\cite{GW95}. In online algorithm design, the use of
primal-dual techniques is more recent (see, e.g.,~\cite{BN-mono}),
though some early uses of primal-dual in online algorithms are for
online Steiner forest~\cite{AAB96, BC97}. Our use of primal-dual in
online algorithms is of a somewhat different flavor, we show how to
implement a sequence of runs of an offline primal-dual algorithm in an
online fashion. Although this high-level idea also drives the algorithm 
in Gu et~al.~\cite{GuGK13}, the actual details differ in important
ways. Unlike the algorithm in~\cite{GuGK13}, an adversary can now choose
to delete a high degree vertex in the tree, which forces us to keep
such deleted vertices in our tree, and accounting for their cost
requires many of our new technical ideas. Moreover, we now give a global
charging for the primal dual process, which is different from the
techniques in that previous work.

The dynamic Steiner tree was recently also studied by {\L}acki et
al.~\cite{LOPSZ13}, who give a constant-competitive $O(\log
\Delta)$-amortized algorithm for the fully-dynamic case (where $\Delta$
is the ratio of the maximum to minimum distance in the metric). They
also consider the dynamic algorithms problem of maintaining a competitive
Steiner tree, counting not just the number of edges changed, but also
the running time required to maintain such a Steiner tree. For this
problem on planar graphs, they use an combination of online algorithms
and dynamic $(1+\epsilon)$-approximate near-neighbor to give $O(\sqrt{n}
\log^5 n \log \Delta)$-time updates for dynamic Steiner tree. Improving
and extending their results remains an interesting line of research.

Our work is morally related to the work of~\cite{KLS05,KLSZ08} who
consider algorithms for Steiner forest based on a timed analysis: they
give a primal-dual algorithm which also grows fake (infeasible) duals,
but still show that the Steiner forest they create via this dual process
is a $2$-approximation.

\section{Steiner Tree with Deletions: Notation and Definitions}
\label{sec:deletions}

In the case of only deletions, we start off with the set of vertices $V
= \{1, \ldots, n\}$ with a metric $d(\cdot, \cdot)$ on it, and we want
to maintain a Steiner tree connecting them as vertices get deleted one
at a time. At each time-step $t$, one of the vertices gets
deleted. Assuming that our algorithm is oblivious to the names of the
vertices, we assume that the vertex $t$ gets deleted at time-step
$t$. Hence, after the $t^{th}$ time-step we need to ensure that the
vertices in $[t+1 \ldots n]$ are connected. We say that a vertex is
alive at a time $t$ if it has not been deleted in the first $t$
time-steps. So, vertices $[t+1,n]$ are alive at time $t$. For brevity, we
denote the alive vertices after time $t$ by $A_t := [t+1 \ldots n]$, and
the deleted vertices by $D_t := [t] = V \setminus A_t$.

A valid solution at time $t$ is potentially a forest $F_t = (V,E_t)$,
such that there is one tree in this forest that contains all the alive
vertices, i.e., the vertices in $A_t$ are in a connected
component. Hence the initial forest $F_0$ is a spanning tree on $V$. The
cost of a forest $F$ is $\cost(F) := \sum_{e \in F} d(e)$, and our
algorithm is $C$-competitive if at all times $t$ it maintains a forest
$F_t$ such that $\cost(F_t) \leq C \cdot \cost(\MST(A_t))$.  We want to
give a constant competitive algorithm such that the number of changes in
the forest is small. The number of changes at time $t$ is $| F_{t-1}
\symdif F_t |$, and we want the number of changes to be constant either
in an amortized sense (i.e., $\sum_{t \leq T} |F_{t-1} \symdif F_t| =
O(T)$ for all $T \in [n]$), or better still, in a non-amortized sense
(i.e., $|F_{t-1} \symdif F_t| = O(1)$ for all $T \in [n]$).

Observe that if we are giving an amortized bound for the deletions-only
case, it is better to delete all trees in $F_t$ except the one
containing the vertices in $A_t$---this reduces the cost and 
the total number of changes in the forest increase by an additive linear term only. However, this is
not necessarily the case in the non-amortized setting, where the
restrictions on the number of changes means it might make sense to keep
around some components containing only deleted vertices.

\newcommand{\Cnew}{\widehat{\C}}
\newcommand{\thr}{\tau}
\newcommand{\ls}{{\ell^{\star}}}
\newcommand{\nalive}{\#\mathsf{alive}}
\newcommand{\thrt}[2]{\tau^{(#1)}_{#2}}
\newcommand{\bt}[2]{b^{(#1)}_{#2}}
\newcommand{\merge}{{\tt merge}}

\section{Deletions: The Amortized Setting}
\label{sec:amortized-pd}

We first give an amortized algorithm for the deletions-only case, and
then we build on this to give the non-amortized algorithm in
Section~\ref{sec:non-amort-new}. While the algorithm does not explicitly
refer to LPs and duals, there is a clear primal-dual intuition for the
amortized algorithm: when a vertex is deleted, we keep it around as a
``zombie'' node, and grow duals around it. We just ensure that the
number of alive nodes growing duals at any time is at least the number of
zombies growing duals. If this condition is violated, we show how to
remove a set of zombies, and only change a comparable number of tree
edges.

We now describe the process formally. 
For each vertex $v \in V$, we have a bit $b_v \in \{0,1\}$ which says
whether $v$ is not deleted ($b_v = 1$) or deleted ($b_v = 0$). Each
vertex also has a \emph{threshold} $\thr_v$, which is initially set to
$\thr_{\max} := \max_{u,v \in V} \lceil \log_2 d(u,v) \rceil$. We
maintain the invariant that $b_v = 1 \implies \thr_v = \thr_{\max}$.
Moreover, the threshold for a vertex is non-increasing as more vertices
get deleted.

A \emph{cluster} $C=(\vx{C}, \tr{C})$ is a set of vertices $\vx{C}$, along with a spanning tree
$\tr{C}$ joining them. 
Clusters come in three flavors:
\begin{itemize}
\item \textbf{Alive}: Cluster $C$ is \emph{alive} if it contains an undeleted vertex, i.e., at least one $v \in \vx{C}$ has $b_v = 1$.
\item \textbf{Zombie:} $C$ is a \emph{zombie cluster at level $\ell \in
    \Z_{\geq 0}$} if it is not alive, but there is at least one vertex
  $v \in \vx{C}$ which has threshold $\thr_v > \ell$.
\item \textbf{Dead:} $C$ is \emph{a dead cluster at level $\ell$} if it
  is neither alive nor a zombie: i.e., all its vertices have been
  deleted, and all of them have thresholds $\thr_v \leq \ell$.
\end{itemize}
Single vertices are trivially clusters, and hence the same definitions
apply for vertices too. Hence each \emph{deleted} vertex is either a
zombie or it is dead; moreover, a deleted vertex is dead at or above level
$\thr_v$ and a zombie at lower levels. Since being dead or a zombie are
associated with some level $\ell$, we will talk about being $\ell$-dead or
$\ell$-zombie. We also use \emph{$\ell$-non-dead} to mean (alive or
$\ell$-zombie).  

\subsection{A Hierarchical Clustering Algorithm}
\label{sec:formcluster-amort}

A {\em clustering} is a set of clusters such that their vertex sets partition the set
of vertices.  A
hierarchical clustering $\C$ is a collection of clusterings $\C_0, \C_1,
\ldots$, with one clustering $\C_\ell=(\vx{C_\ell}, \tr{C_\ell})$ for each \emph{level} $\ell \in
\Z_{\geq 0}$, such that each cluster $C \in \C_{\ell + 1}$ is the union
of some clusters $C_1, C_2, \ldots, C_p \in \C_\ell$ --- i.e., $\vx{C} =
\cup_{i = 1}^p \vx{C_i}$, and the edges of the tree  $\tr{C}$ are a
super-set of the edges of the trees for each of the $\tr{C_i}$. In other
words, when we combine some $p$ clusters  in $\C_{\ell}$ into $C$, we obtain the tree $\tr{C}$ for $C$
by connecting up the trees for each of the clusters by $p-1$ edges. For
each clustering $\C_i$, there is a forest associated with it, namely $\cup_{C \in \C_i} \tr{C}$, the
union of the trees for each of its clusters: we denote this forest by
$E(\C_i)$.

Our algorithm maintains such a hierarchical clustering.
This algorithm (\textbf{FormCluster}) takes the information
$(b_v, \thr_v)$ for each $v \in V$, and outputs a hierarchical
clustering $\C = \{\C_\ell\}$. Given two clusters $C=(\vx{C}, \tr{C})$ and 
$C'=(\vx{C'}, \tr{C'})$, define  $\merge(C,C')$ as the cluster with 
vertex set $\vx{C} \cup \vx{C'}$ and  spanning tree $\tr{C} \cup \tr{C'} \cup \{e\}$, where
$e$ is the edge between the closest vertices in $V(C)$ and $V(C')$. 
In line 4 of the algorithm, we assume
some consistent ways to break ties and ambiguity.
E.g., when choosing clusters $C, C'$ to merge, we choose the closest
pair $C,C'$, and break ties lexicographically. Also note that $d(C,C')$ refers
to the distance between the closest pair of vertices  $\vx{C}$ and $\vx{C'}$ respectively.

\begin{algorithm}[h!]
  \caption{FormCluster}
  \label{alg:cluster}
  \begin{algorithmic}[1]
    \Require Values $(b_v, \thr_v)$ for all $v \in V$.
    \Statex
    \State Initialize $\C_0$ to be the $n$ clusters with singleton vertex sets $\{1\}, \{2\}, \ldots, \{n\}$.
    \For{$\ell = 0, 1, 2, \ldots$}
    \State $\C_{\ell+1} \gets \C_{\ell}$
    \While{there exist  $C, C' \in \C_{\ell+1}$ which are \\  \hspace*{0.6 cm} $\ell$-non-dead with $d(C,C') \leq 2^{\ell+1}$ }
    \State  $\C_{\ell+1} \gets \C_{\ell+1} \cup \{\merge(C,C')\} \setminus \{ C,C' \} $
    \EndWhile
    \EndFor
    \State \textbf{return} the hierarchical clustering $\C = \{\C_\ell\}$
  \end{algorithmic}
\end{algorithm}

There are two special levels associated with this hierarchical clustering: the level
$r(\C)$ is the lowest level such that the clustering $\C_{r(\C)}$
contains a single alive cluster (though there may be other zombie and
dead clusters at level $r(\C)$). The level $s(\C)$ is the lowest level
for which the clustering $\C_{s(\C)}$ contains a single non-dead
cluster, all other clusters are dead at this level. Note that for $\ell
\geq s(\C)$, the clustering $\C_{\ell}$ is the same as $\C_{s(\C)}$, and
hence we can stop the algorithm after this point.

\subsection{The Amortized Algorithm}
\label{sec:amortized-algo}

The main algorithm uses the above clustering procedure. We assume that
all inter-point distances are at least, say, $2$. At time $0$, we start
off with $b_v = 1$ and $\thr_v = \thr_{\max}$ for all $v$.  Let
$\cluster{0}{} \gets \mathbf{FormCluster}(\{(b_v, \thr_v)\}_{v \in V})$,
and $r_0 \gets r(\cluster{0}{})$. Output the tree containing the alive
vertices in the forest $E(\cluster{0}{r_0})$; call this $T_0$.
(Observe: this tree will be a spanning tree on $V$.)

Suppose we have a clustering $\cluster{t-1}{}$, and now vertex $t$ gets
deleted. We first try the lazy thing: just change the bit $b_t$ to $0$,
and run \textbf{FormCluster} to get a hierarchical clustering $\Cnew$.
Observe that for each $\ell$, the clustering $\Cnew_\ell$ is identical to
$\cluster{t-1}{\ell}$, except that possibly one cluster in $\Cnew_\ell$ may
be zombie instead of alive. For level $\ell$, if the number of zombie
clusters in $\Cnew_\ell$ is strictly less than the number of alive
clusters, call the level \emph{good}, else call it \emph{bad}.

\begin{itemize}
\item Case I: Suppose all levels in $\Cnew$ are good, then set
  $\cluster{t}{} \gets \Cnew$ and $T_t \gets T_{t-1}$ and stop.

\item Case II: there are some bad levels in $\Cnew$. Let $\ls$ be
  the lowest bad level.
  Let $\sZ_t$ be
  the set of the zombie clusters in $\Cnew_{\ls}$, and let $Z_t$ be
  the set of vertices in these clusters.
  For each vertex $v \in Z_t$, we set its threshold $\thr_v$ to  $0$.  Hence 
  all nodes in $Z_t$ will be dead at all levels and
  never again take part in cluster formation for future timesteps. Run
  the \textbf{FormCluster} algorithm, now with these new thresholds, to
  get hierarchical clustering $\cluster{t}{}$. Again, $r_t \gets
  r(\cluster{t}{})$, and output the tree containing the alive vertices
  in the forest $E(\cluster{t}{r_t})$; call this tree $T_t$.
\end{itemize}

This completes the description of our algorithm. 
\subsection{The Analysis}
\label{sec:amortized-analysis}

The following facts follows directly from the algorithm above.
%

\begin{fact}
  \label{fct:little-change}
  Suppose we are in Case~II, and consider $\ell <  \ls$:
  \begin{OneLiners}
  \item If a cluster $C \in \Cnew_\ell$ is such that $\vx{C} \cap Z_t = \emptyset $, then 
  $C$ is a cluster in $\cluster{t}{\ell}$ as well with the same status (alive/dead/zombie). 
  Similarly, a cluster $C \in \cluster{t}{\ell}$ with $\vx{C} \cap Z_t = \emptyset $ is also a 
  cluster in $ \Cnew_\ell$ with the same status. 
  \item All vertices in $Z_t$ appear in  $\cluster{t}{\ell}$ as singleton dead clusters. 
  \item If $C \in \Cnew_\ell$ satisfies $\vx{C} \cap Z_t \neq \emptyset $, then $\vx{C} \sse Z_t$.  
    \end{OneLiners}
\end{fact}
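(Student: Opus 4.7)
My plan is to prove the three parts together, first establishing part~3 as a structural property of $\Cnew$, and then using it to show that the FormCluster run producing $\Cnew$ ``decouples'' at levels below $\ls$ into merges among clusters disjoint from $Z_t$ and merges among clusters contained in $Z_t$. Parts~1 and~2 will then follow by induction on $\ell$, since the threshold reset makes every vertex of $Z_t$ a $0$-dead singleton that cannot participate in any merge during the FormCluster run producing $\cluster{t}{}$.

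For part~3 I use the hierarchical structure: every cluster $C \in \Cnew_\ell$ with $\ell \leq \ls$ is contained vertex-wise in a unique cluster $C^\star \in \Cnew_{\ls}$, and the level-$\ls$ clusters partition $V$. By definition $Z_t = \bigcup_{C \in \sZ_t} \vx{C}$ with $\sZ_t \sse \Cnew_{\ls}$, so $\vx{C} \cap Z_t \neq \emptyset$ forces $C^\star \in \sZ_t$, giving $\vx{C} \sse \vx{C^\star} \sse Z_t$. The same argument at $\ell = \ls$ shows the analogous containment for $\Cnew_{\ls}$ itself. This immediately yields a decoupling lemma: at any iteration $\ell' < \ls$ of FormCluster on $\Cnew$, the while-loop never merges a cluster $C$ with $\vx{C} \cap Z_t = \emptyset$ to one $C'$ with $\vx{C'} \sse Z_t$, for otherwise the resulting cluster in $\Cnew_{\ell'+1}$ would both meet $Z_t$ and have vertices outside it, contradicting the containment at level $\ell'+1 \leq \ls$. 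Intermediate merges inside the while-loop only enlarge cluster vertex sets, so a bad intersection created mid-loop would persist to $\Cnew_{\ell'+1}$.

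For parts~1 and~2 I induct on $\ell < \ls$. At $\ell = 0$ both clusterings are singletons: for $v \notin Z_t$ the bit $b_v$ and threshold $\thr_v$ are identical in the two runs so the statuses match, while for $v \in Z_t$ the threshold has been reset to $0$, so $\{v\}$ is dead. For the inductive step, the hypothesis says that the non-$Z_t$ clusters of $\cluster{t}{\ell-1}$ are exactly those of $\Cnew_{\ell-1}$ with matching statuses, and all $Z_t$-vertices are $0$-dead singletons in $\cluster{t}{\ell-1}$. Iteration $\ell-1$ on $\Cnew$ splits, by the decoupling lemma, into a self-contained family of merges among non-$Z_t$ clusters and a self-contained family among $Z_t$-contained clusters. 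On $\cluster{t}{}$ the $Z_t$-singletons are excluded from the while-loop (being $0$-dead, hence not $(\ell-1)$-non-dead), so only the non-$Z_t$ family runs, on identical inputs and under the same closest-pair rule and tie-breaking. Hence the non-$Z_t$ clusters of $\cluster{t}{\ell}$ coincide with those of $\Cnew_\ell$ with identical status, and every $v \in Z_t$ remains a singleton dead cluster.

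The main obstacle is that in $\Cnew$ the while-loop may interleave non-$Z_t$ merges with $Z_t$-internal ones in whatever order the distances and tie-breaking dictate, while on $\cluster{t}{}$ the $Z_t$-internal family is entirely invisible. Decoupling is precisely the tool that makes this harmless: because no cross-merge ever occurs at levels $< \ls$, the non-$Z_t$ process is a closed greedy agglomeration on the same set of clusters in both runs and thus terminates at the same fixed point, regardless of what happens on the $Z_t$ side.
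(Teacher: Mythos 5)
Your proof is correct, and since the paper merely asserts this fact ``follows directly from the algorithm above'' without supplying an argument, you are filling a gap rather than reproducing one. The two ideas you isolate are exactly the right ones: (i)~hierarchical containment, which gives part~3 immediately because every cluster at level $\ell \le \ls$ sits inside a unique cluster of $\Cnew_{\ls}$ and $Z_t$ is a union of whole level-$\ls$ clusters; and (ii)~decoupling, which shows the while-loop at iterations below $\ls$ never joins a $Z_t$-contained cluster to a $Z_t$-disjoint one, so zeroing the thresholds of $Z_t$ simply prunes an independent subfamily of merges without disturbing the rest. Two small points are worth flagging but do not affect correctness. First, your appeal to ``the same closest-pair rule and tie-breaking'' is more machinery than needed: the result of the inner while-loop at level $j$ is just the partition into connected components under the relation ``$j$-non-dead and within distance $2^{j+1}$,'' which is order-independent, so decoupling alone suffices. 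Second, when invoking the decoupling claim at iteration $\ell' = \ls - 1$ you need the containment at level $\ls$ itself, which is not literally in the statement of part~3 (stated for $\ell < \ls$); you correctly note that ``the same argument at $\ell = \ls$'' gives it, and indeed it does since $\sZ_t \sse \Cnew_{\ls}$ partitions $Z_t$ exactly. With those understandings, the induction on $\ell$ and the base case $\ell = 0$ are clean, and the observation that a vertex $v \in Z_t$ with $\thrt{t}{v} = 0$ is $j$-dead for every $j \ge 0$, hence excluded from every merge, closes part~2.
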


\begin{fact}
  \label{fct:only-living}
  In Case~II, for levels $\ell \geq \ls$, all clusters are alive
  or dead; there are no zombies.
\end{fact}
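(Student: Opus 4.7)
My plan is to compare $\cluster{t}{\ls}$ to $\Cnew_{\ls}$ via Fact~\ref{fct:little-change}, show that the level-$\ls$ merging in the new run exactly reproduces the alive clusters of $\Cnew_{\ls}$ (with only dead clusters aside from those), and then propagate the conclusion to all higher levels by a short induction.

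First I would use Fact~\ref{fct:little-change} to describe $\cluster{t}{\ls-1}$: its clusters are precisely the clusters $C \in \Cnew_{\ls-1}$ with $\vx{C} \cap Z_t = \emptyset$ (keeping their status), together with a singleton dead cluster $\{v\}$ for each $v \in Z_t$. Since $Z_t$-singletons are dead at every level, the $\ls$-non-dead clusters in $\cluster{t}{\ls-1}$ are exactly the $\ls$-non-dead clusters of $\Cnew_{\ls-1}$ whose vertex sets are disjoint from $Z_t$.

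The key structural claim is then: for an $\ls$-non-dead cluster $C \in \Cnew_{\ls-1}$, $\vx{C} \cap Z_t = \emptyset$ if and only if $C$ merges into an alive cluster of $\Cnew_{\ls}$. Indeed, if $C$ merges into a zombie $C' \in \Cnew_{\ls}$, then $\vx{C} \sse \vx{C'} \sse Z_t$ by the very definition of $Z_t$; conversely, an alive $C'' \in \Cnew_{\ls}$ is disjoint (as a vertex set) from every zombie cluster of $\Cnew_{\ls}$, and hence from $Z_t$. Because the level-$\ls$ merging depends only on pairwise distances, and because the alive and zombie connected components in the level-$\ls$ ``closeness'' graph of $\Cnew$ share no edges, restricting attention to the nodes lying in alive components---which is exactly what the new run does---reproduces precisely the alive clusters of $\Cnew_{\ls}$. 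The remaining clusters in $\cluster{t}{\ls}$ are then $Z_t$-singletons, previously dead clusters, or $(\ls-1)$-zombie clusters whose thresholds are all at most $\ls$; all three kinds are dead at level $\ls$. Thus $\cluster{t}{\ls}$ contains no zombies.

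The inductive step from level $\ell \geq \ls$ to $\ell+1$ is then immediate: any dead cluster has all thresholds at most $\ell < \ell+1$, so it is not $(\ell+1)$-non-dead and does not participate in any merge; any alive cluster can only merge with other alive clusters, producing an alive result. Hence $\cluster{t}{\ell+1}$ again consists only of alive or dead clusters. The main delicate step is the level-$\ls$ analysis, where one has to verify that suppressing the zombie components of $\Cnew$'s level-$\ls$ merging graph cannot disconnect or otherwise perturb the alive components; once this is pinned down, the higher-level induction is routine.
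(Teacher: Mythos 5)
Your argument is correct, and it supplies the details the paper simply asserts (Facts~\ref{fct:little-change} and \ref{fct:only-living} are stated as ``follow[ing] directly from the algorithm above,'' with no proof given). The skeleton you use---describe $\cluster{t}{\ls-1}$ via Fact~\ref{fct:little-change}, show that the alive components of the level-$\ls$ closeness graph are untouched by removing the zombie components (so $\cluster{t}{\ls}$ has exactly the alive clusters of $\Cnew_{\ls}$ plus dead singletons), then induct upward---is the natural one and it goes through.

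A few small remarks. First, in Section~\ref{sec:amortized-pd} thresholds only ever take the values $0$ or $\thr_{\max}$, which makes some of your case analysis collapse: a non-alive cluster with any vertex of threshold $\thr_{\max}$ is a zombie at \emph{every} level below $\thr_{\max}$, so the notions of $(\ls-1)$-non-dead and $\ls$-non-dead agree here (you slide between the two), and your third residual category at level $\ls$ (``$(\ls-1)$-zombie clusters whose thresholds are all at most $\ls$'') is in fact empty. Using the $\{0,\thr_{\max}\}$ observation up front would shorten the base case: every deleted vertex with threshold $\thr_{\max}$ outside $Z_t$ sits in an alive cluster of $\Cnew_{\ls}$ (else its level-$\ls$ cluster would be an $\ls$-zombie and hence in $\sZ_t$), and the component argument you give transfers that alive cluster intact to $\cluster{t}{\ls}$; everything else is a threshold-$0$ singleton. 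Second, you should handle $\ls = 0$ separately, since there is no level $\ls-1$ to invoke Fact~\ref{fct:little-change} on; but at level $0$ everything is a singleton, so $\sZ_t$ is exactly the set of deleted singletons with threshold $\thr_{\max}$ and the claim is immediate. Third, in the inductive step the merging criterion that forms $\C_{\ell+1}$ is $\ell$-non-deadness, not $(\ell+1)$-non-deadness as you wrote, but since $\ell$-dead implies $(\ell+1)$-dead the conclusion is unaffected. None of these change the substance: the proposal is sound.
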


\begin{lemma}
  \label{lem:no-bad}
  There are no bad levels in $\cluster{t}{}$.
\end{lemma}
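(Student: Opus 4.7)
The plan is to split along the two cases in the algorithm. In Case~I, $\cluster{t}{} = \Cnew$ and we explicitly only enter this case when every level of $\Cnew$ is good, so there is nothing to prove. The real work is in Case~II, where we have to show that the surgery of killing the zombies $Z_t$ at level $\ls$ and rebuilding via \textbf{FormCluster} actually eliminates all bad levels, not just $\ls$ itself.

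For levels $\ell < \ls$, my plan is to compare $\cluster{t}{\ell}$ directly with $\Cnew_\ell$ using Fact~\ref{fct:little-change}. Any cluster $C \in \Cnew_\ell$ disjoint from $Z_t$ survives into $\cluster{t}{\ell}$ with the same alive/zombie/dead status by part~(1), and vertices in $Z_t$ appear as singleton dead clusters by part~(2); part~(3) guarantees that a cluster in $\Cnew_\ell$ which touches $Z_t$ is entirely contained in $Z_t$, and such a cluster therefore contains no alive vertex. Consequently, the alive clusters in $\cluster{t}{\ell}$ are in bijection with those in $\Cnew_\ell$, while the zombie clusters of $\cluster{t}{\ell}$ form a (possibly strict) subset of the zombie clusters of $\Cnew_\ell$ — namely exactly those that did not consist purely of $Z_t$-vertices. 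Since $\ls$ was chosen as the \emph{lowest} bad level of $\Cnew$, the level $\ell < \ls$ is good in $\Cnew$, i.e., strictly more alive than zombie clusters there; this strict inequality is preserved under the comparison above, giving goodness of $\ell$ in $\cluster{t}{}$.

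For levels $\ell \geq \ls$, I would appeal to Fact~\ref{fct:only-living}: after the threshold reset, every cluster at these levels is either alive or dead, so the zombie count at such a level is zero. Provided there is at least one alive cluster, the strict inequality $0 < \#\mathsf{alive}$ holds and the level is good; but the existence of at least one alive cluster at every level is immediate since there is at least one undeleted vertex (we assume $t < n$), and that vertex is contained in some alive cluster at each level. Combining the two ranges covers all $\ell$ and finishes the proof.

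The main obstacle, such as it is, lies in the $\ell < \ls$ range: one must be careful to check that deleting the $Z_t$-containing clusters from $\Cnew_\ell$ cannot turn a previously good level bad by accidentally removing alive clusters or by introducing new zombies. Fact~\ref{fct:little-change}(3) is precisely what prevents this — any $Z_t$-touching cluster is pure zombie/dead — so the alive count is untouched and the zombie count only drops. Once this is laid out cleanly, the rest of the argument is bookkeeping.
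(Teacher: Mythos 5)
Your proof is correct and follows essentially the same route as the paper's: split on Case~I versus Case~II, use Fact~\ref{fct:little-change} to compare $\cluster{t}{\ell}$ with $\Cnew_\ell$ for $\ell < \ls$ (alive count unchanged, zombie count can only drop), and invoke Fact~\ref{fct:only-living} for $\ell \geq \ls$. Your additional remarks — that Fact~\ref{fct:little-change}(3) is what guarantees $Z_t$-touching clusters carry no alive vertex, and that $t < n$ ensures at least one alive cluster at every level $\ell \geq \ls$ — are small elaborations of steps the paper treats as implicit, not a different argument.
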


\begin{proof}
  If we are in Case I and set $\cluster{t}{} = \Cnew$, then we know $\Cnew$
  has no bad levels. Else we are in Case~II, and decrease the thresholds
  of some deleted nodes, and run \textbf{FormCluster} again. For $\ell
  < \ls$, Fact~\ref{fct:little-change} says that each alive
  cluster in $\cluster{t}{\ell}$ corresponds to an alive cluster in
  $\Cnew$, whereas the number of zombie clusters in $\cluster{t}{\ell}$ is
  no more than the number of zombie clusters in $\Cnew_\ell$. Since
  $\ls$ was the lowest numbered bad cluster, level $\ell <
  \ls$ was good in $\Cnew$ and hence is good in $\cluster{t}{}$.
  For $\ell \geq \ls$, the clustering $\cluster{t}{\ell}$
  contains only alive or dead clusters by Fact~\ref{fct:only-living}, so is
  trivially good.
\end{proof}

\begin{fact}
  \label{fct:s-and-c}
  In each clustering $\cluster{t}{\ell}$, all the dead clusters are
  singletons. Moreover, $r(\cluster{t}{}) = s(\cluster{t}{})$.
\end{fact}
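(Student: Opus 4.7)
My plan rests on the key observation that, throughout the algorithm's execution, every threshold $\thr_v$ takes values only in $\{0, \thr_{\max}\}$: initially all thresholds equal $\thr_{\max}$, and the only modification happens in Case~II, where certain thresholds are reset to $0$. I would then prove the first claim---that all dead clusters in $\cluster{t}{\ell}$ are singletons---by induction on $\ell$. The base case $\ell = 0$ is immediate, since $\C_0$ consists of singletons. For the inductive step, $\C_{\ell+1}$ differs from $\C_\ell$ only through merges of $\ell$-non-dead clusters, so any dead cluster in $\C_\ell$ persists unchanged and remains a singleton by the inductive hypothesis.

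It then remains to rule out newly-merged clusters in $\C_{\ell+1}$ being dead at level $\ell+1$. Each constituent $\ell$-non-dead cluster contributes either an alive vertex ($b_v = 1$) or a vertex with $\thr_v > \ell$, which by the key observation must have $\thr_v = \thr_{\max}$. If $\ell + 1 < \thr_{\max}$, such a vertex has $\thr_v > \ell + 1$, so the merged cluster is $(\ell+1)$-non-dead, not dead. If $\ell + 1 \geq \thr_{\max}$, the merge radius $2^{\ell+1}$ is at least the diameter of the metric, so every $\ell$-non-dead cluster---including some cluster containing an alive vertex, which exists because $A_t$ is nonempty---gets merged into a single cluster, which is therefore alive rather than dead.

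For the second claim $r(\cluster{t}{}) = s(\cluster{t}{})$, I would establish both inequalities. The inequality $r(\cluster{t}{}) \leq s(\cluster{t}{})$ follows because at level $s(\cluster{t}{})$ the unique non-dead cluster contains every alive vertex and is therefore alive, so this level already has a single alive cluster. For the reverse inequality $s(\cluster{t}{}) \leq r(\cluster{t}{})$, I would invoke Lemma~\ref{lem:no-bad}: at level $r(\cluster{t}{})$ the number of alive clusters is exactly $1$, and the no-bad-levels property forces the number of zombie clusters to be strictly smaller, hence $0$. So at level $r(\cluster{t}{})$ every cluster is alive or dead, giving a single non-dead cluster and hence $s(\cluster{t}{}) \leq r(\cluster{t}{})$.

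The main obstacle is the boundary case $\ell + 1 = \thr_{\max}$ in the induction: here a merge of $(\thr_{\max}-1)$-zombie clusters could in principle create a non-singleton cluster that becomes dead at level $\thr_{\max}$, but this is averted because at that level the merge radius is at least the diameter, forcing absorption into the single alive cluster containing all the alive vertices. Once this boundary case is handled cleanly, the rest of the argument is a straightforward combination of the no-bad-levels guarantee with the definitions of $r(\cdot)$ and $s(\cdot)$.
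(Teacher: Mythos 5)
Your proof is correct and follows essentially the same approach as the paper's: the key observation that thresholds in the amortized algorithm only ever take the values $0$ and $\thr_{\max}$ (so non-singleton dead clusters could only arise at level $\thr_{\max}$, where the merge radius exceeds the diameter and absorbs everything non-dead into the alive cluster), and then Lemma~\ref{lem:no-bad} for $r(\cluster{t}{})=s(\cluster{t}{})$. You spell out the boundary case at level $\thr_{\max}$ more explicitly than the paper's one-sentence justification of the first claim, but the underlying argument is the same.
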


\begin{proof}
  When we reduce the threshold for some node, we set it to zero, which
  gives the first statement.  For the second statement, suppose when
  all the alive nodes belong to the same cluster at level
  $r(\cluster{t}{})$, there is another zombie cluster.  Then this level
  would be bad, which would contradict Lemma~\ref{lem:no-bad}.
\end{proof}

\begin{lemma}
  \label{lem:change-amort}
  The number of edges that need to be added or dropped in going from
  $T_{t-1}$ to $T_{t}$ is at most $3|Z_t|$.
\end{lemma}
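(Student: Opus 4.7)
Case~I is trivial since $T_t=T_{t-1}$; the substance is Case~II. My plan is to compare the two runs of \textbf{FormCluster} -- one producing $\Cnew$ (from flipping $b_t$ to $0$ but leaving all thresholds intact), the other producing $\cluster{t}{}$ (after zeroing the thresholds of vertices in $Z_t$) -- using Facts~\ref{fct:little-change} and~\ref{fct:only-living}. For each level $\ell<\ls$, Fact~\ref{fct:little-change} says the two clusterings agree on every cluster disjoint from $Z_t$, and every vertex of $Z_t$ is a singleton dead cluster in $\cluster{t}{\ell}$. Fact~\ref{fct:only-living} says that from level $\ls$ upward, $\cluster{t}{}$ contains no zombies, which means in both clusterings every non-dead cluster at level $\ls-1$ whose max threshold exceeds $\ls$ is eventually pulled into the alive component.

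I would then decompose each of $T_{t-1}$ and $T_t$ into \emph{inner} edges (those lying inside a level-$(\ls-1)$ cluster of the corresponding clustering) and \emph{merging} edges (added at levels $\ell\ge\ls$ when clusters are stitched together into the alive component), and bound the contribution of each type to $T_{t-1}\symdif T_t$. Inner edges outside $Z_t$ are shared by both trees and cancel. Inner edges inside $Z_t$ form a forest on $Z_t$ and appear only in $T_{t-1}$, contributing at most $|Z_t|-1$ edges. The merging edges of $T_{t-1}$ that attach an $\sZ_t$-subtree to the alive component appear only in $T_{t-1}$; there is at most one such edge per $\sZ_t$-cluster, contributing at most $|\sZ_t|\le|Z_t|$ edges. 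Finally, the merging edges of $T_t$ among the non-$Z_t$ clusters have to reconnect the alive component in the absence of the $Z_t$-bridges, and a level-by-level coupling of \textbf{FormCluster}'s greedy merge rule shows that at most $|\sZ_t|\le|Z_t|$ new such edges appear in $T_t$ compared with $T_{t-1}$. Summing gives $|T_{t-1}\symdif T_t|\le 3|Z_t|$.

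The main obstacle is the last step: bounding the number of \emph{new} non-$Z_t$ merging edges of $T_t$ that are absent from $T_{t-1}$. Because \textbf{FormCluster} picks the closest pair at each level with a fixed tie-break, deleting $Z_t$-clusters can reroute some non-$Z_t$ merges at higher levels, so the merging edges are not automatically the same. I plan to handle this by a level-by-level coupling starting at level $\ls$: for every non-$Z_t$ merge produced by $\cluster{t}{}$ at level $\ell$ that is not in $\Cnew$, I find a corresponding $Z_t$-mediated merge of $\Cnew$ (at the same or a lower level) that would have joined the same two non-$Z_t$ clusters via a $Z_t$-bridge, which yields a one-to-one charging of the new merging edges against the $\sZ_t$-clusters and closes the $|\sZ_t|$ bound.
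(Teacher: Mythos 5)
Your decomposition runs into trouble at steps~3 and~4, and the source of the trouble is that you never invoke the defining property of the level $\ls$ --- that it is \emph{bad}, i.e., the number of zombie clusters $q=|\sZ_t|$ is at least the number of alive clusters $p$ in $\Cnew_{\ls}$. Without that inequality, the counts you claim do not hold. Concretely, step~3 asserts ``at most one such edge per $\sZ_t$-cluster,'' but in the amortized algorithm $\sZ_t$ is \emph{all} zombies at level $\ls$, with no degree restriction in $\G_{\ls}$ (the degree-$1$/degree-$2$ selection only appears in the non-amortized algorithm of Section~\ref{sec:non-amort-new}). A zombie cluster that is a hub of the skeleton can have degree $\Theta(p)$, so the number of $T_{t-1}$-edges above level $\ls$ with an endpoint in a $\sZ_t$-cluster is not bounded by $|\sZ_t|$. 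The same issue poisons step~4: removing a degree-$k$ hub zombie forces $k-1$ replacement merges among the non-$Z_t$ clusters, which cannot be charged one-to-one to $\sZ_t$-clusters. You also implicitly assume that merging edges of $T_{t-1}$ not touching $\sZ_t$ survive into $T_t$; above level $\ls$ there is no such guarantee, since removing the zombies can completely reroute the greedy merges (only the sub-$\ls$ part is pinned down by Fact~\ref{fct:little-change}).

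The paper sidesteps all of this by being deliberately coarse: it makes no attempt to match merging edges across the two runs. It simply notes that $T_{t-1}$ contains $|Z_t|-q$ edges internal to $\sZ_t$-clusters plus exactly $p+q-1$ edges added above level $\ls$ (one fewer than the number of non-dead clusters there), that $T_t$ contains $p-1$ edges above level $\ls$, and that in the worst case all of the former are dropped and all of the latter are fresh. The total $(|Z_t|-q)+(p+q-1)+(p-1)$ is then controlled by $p\le q$ (badness) and $q\le|Z_t|$, giving $<3|Z_t|$. No coupling argument is needed, and none would close without first using $p\le q$. If you want to salvage your refined approach, the fix is to replace steps~3 and~4 by the observation that the merging edges above $\ls$ total $p+q-1$ in $T_{t-1}$ and $p-1$ in $T_t$, and then apply $p\le q$; at that point you have reproduced the paper's counting and the refinement buys nothing.
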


\begin{proof}
  We must be in Case~II, else $T_t = T_{t-1}$ and there are no edge
  changes. Let $\cluster{t}{\ls}$ have $p$ non-dead clusters,
  which by Fact~\ref{fct:only-living} are all alive. Let $|\sZ_t| = q$ (recall that $\sZ_t$ is the set 
  of zombie clusters  in $\Cnew_{\ls}$) .
  By Fact~\ref{fct:little-change} (and the fact that $\cluster{t-1}{}$
  and $\Cnew$ have the same clusters, modulo some being alive in the former
  and zombie in the latter), the non-dead clusters in $\Cnew_{\ls}$
  and in $\cluster{t-1}{\ls}$ are precisely these $p+q$ clusters.
  By the definition of $\ls$ being a bad level, $q \geq p$.  The
  number of edges that have changed in going from $T_{t-1}$ to $T_t$
  are:
  \begin{OneLiners}
  \item[(a)] Those edges within clusters of $\sZ_t$ are gone; there are
    exactly $|Z_t| - |\sZ_t| = |Z_t| - q$ of these.

  \item[(b)] The $p+q-1$ edges that connect up the clusters in
    $\cluster{t-1}{\ls}$ have potentially been dropped.
  \item[(c)] We add in $p-1$ new edges to connect up the $p$
    alive clusters in $\cluster{t}{\ls}$.
  \end{OneLiners}
  So the total number of edge changes is
    \[ |Z_t| - q + (p+q-1) + p = |Z_t| + 2p - 1 < |Z_t| + 2q.\] Finally,
    note that $q = |\sZ_t| \leq |Z_t|$, so this is less than $3|Z_t|$.
\end{proof}

The above lemma shows that our algorithm makes constant number of changes in the
tree in an amortized sense. Indeed, the set of vertices in $Z_t$ are disjoint for
different values of $t$ -- once a node enters the set $Z_t$, it cannot belong to a 
zombie cluster in subsequent timesteps. 

\begin{fact}
  \label{fct:separation}
  Any two non-dead clusters in $\cluster{t}{\ell}$ are at distance more
  than $2^\ell$ from each other.
\end{fact}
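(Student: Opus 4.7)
The plan is to prove the separation bound by reading off directly from the termination condition of the while-loop inside \textbf{FormCluster}, combined with a monotonicity observation about the ``non-dead'' property across levels.

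First I would focus on the iteration indexed by $\ell - 1$ inside \textbf{FormCluster}: this is the iteration that produces $\cluster{t}{\ell}$ from $\cluster{t}{\ell-1}$. The while-loop runs as long as there exist two $(\ell-1)$-non-dead clusters at distance $\leq 2^{(\ell-1)+1} = 2^\ell$, and merges any such pair (note that $\merge(C,C')$ inherits the $(\ell-1)$-non-dead status of its constituents, so the invariant is well-defined as the loop proceeds). When the loop terminates, every pair of $(\ell-1)$-non-dead clusters surviving in $\cluster{t}{\ell}$ therefore has distance strictly greater than $2^\ell$.

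Next, I would establish the monotonicity statement: if a cluster $C$ is $\ell$-non-dead, then it is also $(\ell-1)$-non-dead. Unpacking the definitions: either $C$ is alive, in which case it is non-dead at every level, or $C$ contains a vertex $v$ with $b_v = 0$ and $\thr_v > \ell$, in which case $\thr_v > \ell - 1$ as well, so $C$ is $(\ell-1)$-zombie. Combining this with the previous paragraph: two non-dead clusters in $\cluster{t}{\ell}$ are $(\ell-1)$-non-dead, and hence at distance $> 2^\ell$, which is exactly the claim for $\ell \geq 1$.

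For the base case $\ell = 0$, the clustering $\cluster{t}{0}$ consists of singletons, and the standing assumption in Section~\ref{sec:amortized-algo} that all interpoint distances are at least $2$ gives pairwise separation $\geq 2 > 1 = 2^0$. There is no real obstacle here: the argument is essentially a tautology given how \textbf{FormCluster} is written; the only thing to be careful about is the off-by-one between the loop's merging threshold ($2^{\ell+1}$ at iteration $\ell$) and the level label of the resulting clustering ($\C_{\ell+1}$), which is handled by shifting indices down by one.
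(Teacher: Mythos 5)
Your proof is correct, and the paper does not provide one (it states the claim as a \emph{Fact} without proof), so the comparison is really against what the paper leaves implicit. Your argument is the intended one: the while-loop at iteration $\ell-1$ terminates only when no two $(\ell-1)$-non-dead clusters in $\C_\ell$ are within distance $2^\ell$; the monotonicity observation ($\ell$-non-dead $\Rightarrow$ $(\ell-1)$-non-dead, since a witness vertex with $\thr_v > \ell$ also has $\thr_v > \ell-1$, and an alive cluster is non-dead at every level) correctly handles the small index mismatch between the loop's guard and the level label of the resulting clustering; and the base case $\ell = 0$ is disposed of by the standing assumption that inter-point distances are at least $2$. One small point worth being explicit about, which you gesture at parenthetically: the merge of two $(\ell-1)$-non-dead clusters is again $(\ell-1)$-non-dead because the merged vertex set contains all witnesses, and merging can only decrease distances to other clusters, so the loop's termination condition really does certify the separation for every surviving pair.
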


\begin{lemma}
  \label{lem:cost-amort}
  For any $t$, the cost of $T_t$ output by the algorithm is within
  $O(1)$ of the optimal Steiner tree on the alive nodes  $[t+1,n]$.
\end{lemma}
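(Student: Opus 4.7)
The plan is to charge $\cost(T_t)$ level-by-level against the optimal Steiner tree on $A_t$, combining the separation property (Fact~\ref{fct:separation}) with the no-bad-level guarantee (Lemma~\ref{lem:no-bad}). For each level $\ell \le r_t$, I will let $m_\ell$ denote the number of clusters in $\cluster{t}{\ell}$ whose vertex sets lie inside the unique alive cluster at level $r_t$; these are exactly the clusters that contribute edges to $T_t$. Going from level $\ell$ to $\ell+1$, iteration $\ell$ of \textbf{FormCluster} adds exactly $m_\ell - m_{\ell+1}$ edges to $T_t$, each of length at most $2^{\ell+1}$, and so an easy Abel summation gives
\[
\cost(T_t) \;\le\; \sum_{\ell=0}^{r_t-1} (m_\ell - m_{\ell+1}) \cdot 2^{\ell+1} \;=\; O\!\left(\sum_{\ell} m_\ell \cdot 2^\ell\right).
\]

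Next I would observe that $m_\ell \le 2 a_\ell$, where $a_\ell$ is the number of alive clusters at level $\ell$: each cluster counted by $m_\ell$ is either alive or a zombie, and Lemma~\ref{lem:no-bad} tells us every level of $\cluster{t}{}$ is good, so there are strictly fewer zombies than alive clusters at each level. It therefore suffices to show $\sum_\ell a_\ell \cdot 2^\ell = O(\OPT)$. To link alive clusters back to $\OPT$, I would first prove by induction on $\ell$ the following single-linkage claim: any two alive vertices at distance $\le 2^\ell$ lie in the same cluster of $\cluster{t}{\ell}$. The inductive step uses that iteration $\ell - 1$ merges any two non-dead clusters within distance $2^\ell$, and that any cluster containing an alive vertex remains non-dead throughout the merging process.

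Now let $\beta_\ell$ denote the number of edges of $\MST(A_t)$ of length in $(2^\ell, 2^{\ell+1}]$. Deleting MST edges longer than $2^\ell$ leaves $1 + \sum_{\ell' \ge \ell} \beta_{\ell'}$ connected components, each of which by the single-linkage claim is contained inside one alive cluster at level $\ell$. Hence $a_\ell \le 1 + \sum_{\ell' \ge \ell} \beta_{\ell'}$, and swapping the order of summation yields
\[
\sum_\ell a_\ell \cdot 2^\ell \;=\; O(2^{r_t}) + O(\cost(\MST(A_t))).
\]
Since $\cost(\MST(A_t)) \le 2\,\OPT$, the MST term is $O(\OPT)$. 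For the boundary term, the minimality of $r_t$ forces $a_{r_t-1} \ge 2$, so by Fact~\ref{fct:separation} there are two alive vertices at distance $> 2^{r_t - 1}$, giving $\OPT \ge 2^{r_t - 1}$ and hence $2^{r_t} = O(\OPT)$ as well.

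The main technical obstacle I expect is the inductive single-linkage claim --- specifically, verifying that intermediate zombie or dead clusters along the merging chain do not disrupt the argument. The subtlety is to argue that the two alive clusters containing the endpoint alive vertices remain non-dead throughout iteration $\ell-1$ of \textbf{FormCluster} (so they are always eligible to merge), and that the merging rule together with the distance bound $d(u,v)\le 2^{\ell}$ indeed forces them to end up in a common cluster at level $\ell$.
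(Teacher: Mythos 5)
Your proof is correct, and it reaches the same conclusion via a genuinely different lower-bound argument. The upper bound on $\cost(T_t)$ (the Abel/telescoping step, together with the observation that goodness at every level gives $m_\ell \le 2a_\ell$) is essentially the same as the paper's; the difference is in how each argument lower-bounds $\OPT$. The paper packs dual moats: since by Fact~\ref{fct:separation} the alive clusters at level $\ell$ are pairwise at distance $> 2^\ell$, one can grow disjoint balls of radius $\Theta(2^\ell)$ around them, giving the lower bound $\sum_\ell (\lceil \kappa_{t,\ell}/2\rceil - 1)\cdot 2^{\ell-2}$ on the Steiner tree directly. You instead compare $a_\ell$ to the number of components of $\MST(A_t)$ after deleting its edges longer than $2^\ell$, using the single-linkage property of \textbf{FormCluster} (alive vertices connected by a chain of short edges end up in one alive cluster), and then invoke $\cost(\MST(A_t)) \le 2\,\OPT$. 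Both are standard, essentially dual, pieces of bookkeeping; the moat-packing argument is the natural one given the primal-dual framing of the paper, while your Kruskal-style counting is more elementary and avoids any appeal to LP duality. Your handling of the boundary (the $+1$ per level, absorbed by $\OPT \gtrsim 2^{r_t}$ via the two well-separated alive clusters at level $r_t-1$) is sound and plays the same role as the paper's ``$-1$'' in $\lceil\kappa_{t,\ell}/2\rceil - 1$. One thing you are right to flag as needing care, and which indeed goes through, is that the clusters you count in $m_\ell$ are all non-dead (a dead cluster can never be merged upward into the alive cluster at level $r_t$), so the goodness bound does apply to them.
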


\begin{proof}
  Let $\kappa_{t,\ell}$ be the number of non-dead clusters in
  $\cluster{t}{\ell}$. Observe that $\kappa_{t,\ell} > 1$ for all $\ell <
  r(\cluster{t}{\ell})$ and $\kappa_{t,\ell} = 1$ for all other $\ell$.  Since
  all levels in $\cluster{t}{}$ are good (by Lemma~\ref{lem:no-bad}), we
  know that $\ceil{\kappa_{t,\ell}/2}$ clusters at level $\ell$ are alive
  clusters. And by Fact~\ref{fct:separation}, all these are at distance
  at least $2^\ell$ from each other. A standard dual packing gives a lower
  bound on the cost of the optimal Steiner tree of
  \begin{gather}
    \sum_{\ell \geq 0} (\lceil \kappa_{t,\ell}/2 \rceil - 1) \cdot 2^{\ell-2}
    \geq \frac12 \sum_{\ell = 0}^{r(\cluster{t}{})-1} (\kappa_{t,\ell}/4) \cdot
    2^{\ell-2}. \label{eq:5}
  \end{gather}

  Let $n_{t,\ell}$ be the number of edges added in forming $\cluster{t}{\ell}$
  from $\cluster{t}{\ell-1}$. Hence the cost of the tree $T_t$ is at most
  \[ \sum_{\ell \geq 1} n_{t,\ell} \cdot 2^\ell. \] Moreover, $n_{t,\ell} =
  \kappa_{t,\ell-1} - \kappa_{t,\ell}$, since the number of edges added is
  exactly the reduction in the number of components, so the cost of the
  tree $T_t$ is at most
  \[ \sum_{\ell \geq 1} (\kappa_{t,\ell-1} - \kappa_{t,\ell}) \cdot 2^\ell \leq
  2\kappa_{t,0} + \sum_{\ell = 1}^{r(\cluster{t}{})} \kappa_{t,\ell} 2^\ell. \]
  This is at most a constant times the lower bound~(\ref{eq:5}), which
  proves the result.
\end{proof}

A constant-competitive constant-amortized algorithm for the
deletions-only case can be inferred from the techniques of Imase and
Waxman~\cite{IW91}, so the result is not surprising. But the above
algorithm can be extended to the non-amortized setting, as we show next.

\section{Deletions: The Non-Amortized Setting}
\label{sec:non-amort-new}
We now describe our algorithm in the non-amortized setting. 
Our algorithm is a direct extension of the one above, so
let us think about why we get a large number of changes. This could happen 
because of two reasons. Firstly, if there were
some deletion that caused a large zombie cluster to be marked dead, we
would remove all the edges within the tree connecting this cluster and
hence make a large number of changes. The main observation is that since
we could pay for all the edges within this tree in the previous step, we
should be also able to pay for most of them at the next step, and it
should suffice to remove a constant number of edges. To do this, we will not just set the thresholds
to $\tau_{\max}$ or $0$, but will slowly lower them.

Secondly, we happened to mark a 
small zombie cluster dead, but it was being used to connect many other clusters. 
We get around this problem by marking only those zombie clusters dead which would be 
used for connecting a small number of clusters in subsequent steps -- we show that 
it is always possible to find such zombie clusters.

\subsection{A Modified Cluster-Formation Algorithm}
\label{sec:formcluster-new}

The first change from the previous algorithm is that we want the tree
$T_t$ to be similar to $T_{t-1}$. So we explicitly ensure this by being
as similar to an ``old'' clustering $\C'$ given as input; the algorithm 
is otherwise very similar to Algorithm \textbf{FormCluster}, and we assume
the reader is familiar with that section. Again, let $E(\C_\ell)$ be the
edges contained in the forest corresponding to a clustering $\C_\ell$.

\begin{algorithm}[h!]
  \caption{FormClusterNew}
  \label{alg:cluster-lip}
  \begin{algorithmic}[1]
    \Require Values $(b_v, \thr_v)$ for all $v \in V$, old hierarchical clustering $\C'$.
    \Ensure A hierarchical clustering $\C = \{\C_\ell\}$.
    \Statex
    \State  Initialize $\C_0$ to be the $n$ clusters with singleton vertex sets $\{1\}, \{2\}, \ldots, \{n\}$.
    \For{$j = 0, 1, 2, \ldots$}
    \State $\C_{j+1} \gets \C_{j}$
    \While{there exists edge $e \in E(\C'_{j+1})$ between \\
     \hspace*{0.7 cm}  $j$-non-dead clusters $C, C' \in \C_{j+1}$}
    \State \hspace*{1 mm} Define a new cluster $C''$ with 
     \State \hspace*{3 mm} $\vx{C''} = \vx{C} \cup \vx{C'}$ 
    \State \hspace*{3 mm} and $\tr{C''} = \tr{C} \cup \tr{C'} \cup \{e\}$.
    \State \hspace*{1 mm} $\C_{j+1} \gets \C_{j+1} \cup \{ C''  \} \setminus \{C,C'\}$ \\
    \Comment{\emph{the cluster $C''$ is $j$-non-dead}}
    \EndWhile
    \While{there exist $j$-non-dead clusters $C, C' \in \C_j$ \\ \hspace*{0.7 cm}  with $d(C,C') \leq 2^j$ }
    \State  \hspace*{1 mm} $\C_{j+1} \gets \C_{j+1} \cup \{ \merge{C,C'} \} \setminus \{C,C'\}$ \\
    \Comment{\emph{again, $\merge(C,C')$ is $j$-non-dead}}
    \EndWhile
    \EndFor
    \State \textbf{return} the new hierarchical clustering $\C = \{\C_\ell\}$
  \end{algorithmic}
\end{algorithm}

Again there are two special levels: level $r(\C)$ is the lowest level
such that $\C_{r(\C)}$ contains a single alive cluster, and level
$s(\C)$ is the lowest level where $\C_{s(\C)}$ contains a single
non-dead cluster, all other clusters are dead at this level.

\subsection{The Non-Amortized Algorithm}
\label{sec:non-amortized-algo}

Again, assume that all inter-point distances are at least $2$. At time
$0$, start off with $b_v = 1$ and $\thr_v = \thr_{\max}$ for all $v$.
Let $\cluster{0}{} \gets \mathbf{FormCluster}(\{(b_v, \thr_v)\}_{v \in
  V})$, and $r_0 \gets r(\cluster{0}{})$. Output the tree containing the
alive vertices in the forest $E(\cluster{0}{r_0})$; call this $T_0$.

Consider the clustering $\cluster{t-1}{}$ corresponding to thresholds
$\thrt{t-1}{}$ and bits $\bt{t-1}{}$. Now vertex $t$ is deleted, so set
$\bt{t}{t} = 0$, and $\bt{t}{j} = \bt{t-1}{j}$ otherwise, and run
\textbf{FormClusterNew}$(\thrt{t-1}{}, \bt{t}{})$ to get a new
hierarchical clustering $\Cnew$. As before, $\cluster{t-1}{}$ and $\Cnew$ will
be identical at all levels, except for perhaps one cluster at each level
being alive in the former and zombie in the latter. The algorithm \textbf{FormClusterNew} is
described above. 

The definition of a level being good is slightly different now. We first develop some more
notation. Denote the edges added at level $\ell$ of any
clustering $\C$ as $E_{\ell}(\C)$; i.e., $E_{\ell}(\C) := E(\C_{\ell})
\setminus E(\C_{\ell - 1})$. Let $m_{\ell}(\C) := |E_{\ell}(\C)|$ be the
cardinality of this set, and $m_{>\ell}(\C) := \sum_{j > \ell} m_j(\C)$
denote the edges added at levels (strictly) above level $\ell$, up to
and including level $s(\C)$.  (Note this includes edges above level
$r(\C)$; of course no edges are added above level $s(\C)$.) Finally, let
$\nalive(\C_{\ell})$ denote the number of alive clusters in $\C_\ell$.

A level $\ell$ of a hierarchical clustering $\C$ is \emph{good} if
\begin{gather}
  m_{> \ell}(\C) \leq 3\, \nalive(\C_\ell); \label{eq:6}
\end{gather}
i.e., if the number of edges
added above level $\ell$ is at most three times the number of alive
components at level $\ell$. Note that if the number of  alive components were more
than a third of the number of non-dead components, the level would be
good. But since we will now use the thresholds in a more nuanced way, a
level may be good even if almost all the non-dead components are zombies.

One final definition: given a hierarchical clustering $\C$ and a level
$\ell$, the \emph{level-$\ell$ skeleton} $\G_\ell(\C)$ is an undirected graph defined as
follows. The vertex set is the set of clusters of $\C_\ell$. There is an
edge in $\G_\ell(\C)$ connecting two clusters $C, C' \in \C_\ell$
precisely when there is an edge between $C, C'$ in the set $\cup_{\ell'
  > \ell} E_{\ell'}(\C)$. In other words, if we were to take the
clustering $\C_{s(\C)}$ and collapse each cluster $C \in \C_\ell$ into a
single node, we would get $\G_\ell(\C)$. By our construction, the
skeleton is always a forest, and contains $m_{> \ell}(\C)$ edges. The
\emph{degree} of a cluster $C \in \C_{\ell}$ is the degree of the
corresponding node in $\G_{\ell}(\C)$.

Now we are ready to state the algorithm. Recall that we constructed $\Cnew
\gets \mathbf{FormClusterNew}(\thrt{t-1}{}, \bt{t}{})$. Again, there are
two cases:
\begin{itemize}
\item Case I: Suppose all levels in $\Cnew$ are good, or for every bad
  level $\ell$ we have $m_{>\ell}(\Cnew) < 36$. Then set $\cluster{t}{}
  \gets \Cnew$.

\item Case II: There exists a bad level $\ell$ such that $m_{>\ell}(\Cnew)
  \geq 36$. Let $\ls$ be the \emph{highest} such level, and consider the
  skeleton $\G_{\ls}(\Cnew)$. Choose a set $\sZ_t$ of $6$ zombie clusters
  from $\Cnew_{\ls}$ that have degree $1$ or $2$ in
  $\G_\ls(\Cnew)$.\footnote{Such a set exists by Fact~\ref{fact:tree}:
    since $\G_\ls(\Cnew)$ is a forest of at least $36$ edges, if we choose
    $A$ to be the set of alive clusters, Fact~\ref{fact:tree} implies
    there must be $6$ non-alive clusters of degree one or two. Since
    these have non-zero degree in $\Cnew_{\ls}$, they cannot be dead at
    this level and must be zombies.} Let $Z_t$ be the vertices in these
  clusters. For each $v \in Z_t$, set $\thrt{t}{v} \gets
  \min(\thrt{t-1}{v}, \ls)$. Observe that these nodes  now form singleton dead
clusters  at
  level $\ls$ according to the new thresholds $\thrt{t}{}$, whereas 
at least one of them in each cluster must have had threshold above $\ls$,
 i.e., $\forall C \in \sZ_t,
  \exists v \in \vx{C}: \thrt{t-1}{v} > \ls = \thrt{t}{v}$. This is true because these clusters 
  are non-dead.  So we're making progress in terms of strictly decreasing the threshold for some
  vertices.  Now set $\cluster{t}{} \gets
  \mathbf{FormClusterNew}(\thrt{t}{}, \bt{t}{})$.
\end{itemize}

In either case, let $r_t \gets r(\cluster{t}{})$ be the lowest level
with a single alive cluster, and return the \emph{forest} corresponding
to this level --- i.e., $F_t \gets E(\cluster{t}{r_t})$.

For future convenience, define $s_t \gets s(\cluster{t}{})$ to be the
lowest level with a single non-dead cluster, so all the edges in
$\cluster{t}{}$ belong to $E(\cluster{t}{s_t})$. In Case~I, $r_t$ may be
much smaller than $r_{t-1}$ but $s_t = s_{t-1}$. Moreover, let $F'_t$
be the edges added at all levels of the algorithm --- so $F_t'
\setminus F_t$ are the edges added in levels $r_t+1, \ldots, s_t$.

Before we begin the analysis, notice the differences between this
algorithm and the previous amortized one: previously, badness meant the
number of zombies was more than the alive clusters at that level, now
badness means the number of edges being added above the level is much
more than the number of alive clusters in that level. Previously, we
chose all the zombie clusters at the \emph{lowest} bad level and made 
their nodes dead right at level $0$. Now we choose the \emph{highest} bad
level $\ls$ and carefully choose some six zombie clusters, and make
their nodes dead only at level $\ls$ --- this will ensure that only a small
number of edges will change between timesteps $t$ and $t+1$. 

\subsection{The Analysis}
\label{sec:nonamort-analysis}

At a high level, the analysis will proceed analogously to
Section~\ref{sec:amortized-analysis}, but the details are more
interesting. If we are in Case I, things are simple. Indeed, for each
$\ell$, the clusterings $\cluster{t}{\ell}$ and $\cluster{t-1}{\ell}$
have the same clusters, apart from potentially one cluster being zombie
in the former and alive in the latter. Since bad levels, if any, satisfy
$m_{> \ell}(\cluster{t}{}) < 36$, we get that for every level $\ell$, we
have $m_{> \ell}(\cluster{t}{}) < 3\nalive(\cluster{t}{\ell}) + 36$ if
$\cluster{t}{}$ was produced from Case~I.

The bulk of the work will be to show an analogous inequality for
Case~II. Here, we first show that the structure of the clusterings at
times $t-1$ and $t$ differ only in a controlled fashion. In fact, we
show that a large number of ``safe'' edges will be common to $F_t$ and
$F_{t-1}$. This will allow us to show that all levels in the clustering
we output may not be good, we still have $m_{> \ell}(\cluster{t}{}) <
3\nalive(\cluster{t}{\ell}) + O(1)$ in this case.

And why is such a bound useful? If there is a single living cluster at
some level, only a constant number of edges are added above this level,
and dropping them will change only a constant number of edges. On the
other hand, if more than one cluster is alive, we are able to pay for
the edges added at this level. Putting all this together will ensure
that $|F_t \symdif F_{t-1}|$ is bounded, and $F_t$ is constant
competitive.

\subsubsection{The Structure of Clusters}

\begin{lemma}
  \label{lem:refinement}
  The clusters in $\cluster{t}{\ell}$ are a refinement of the clusters
  in $\cluster{t-1}{\ell}$:  for each  cluster $C \in
  \cluster{t-1}{\ell}$, $\vx{C}$ is the union of  vertex sets $\vx{C_1}, \ldots, 
\vx{C_p}$ corresponding to  some   clusters $C_1,
  C_2, \ldots, C_p \in \cluster{t}{\ell}$. 

  Suppose we are in  Case~II. Then each of the clusters in $\sZ_t$ belong to
  $\cluster{t}{\ell}$ for all $\ell \geq \ls$, and they are dead
  clusters at these levels.
  Moreover, if cluster $C \in
  \cluster{t-1}{\ell}$ is non-dead, then each of the corresponding $C_i
  \in \cluster{t}{\ell}$ are either in $\sZ_t$ or are non-dead.
\end{lemma}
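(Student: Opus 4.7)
The plan is to prove all three parts simultaneously by induction on the level $\ell$, using two key facts: the new thresholds satisfy $\thrt{t}{v} \leq \thrt{t-1}{v}$ pointwise (with equality for $v \notin Z_t$, and $\thrt{t}{v} = \min(\thrt{t-1}{v}, \ls)$ for $v \in Z_t$), and the only bit change is $\bt{t}{t}=0$, which merely downgrades alive to zombie but preserves non-deadness as long as $\thrt{t}{t}$ exceeds the current level.

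For the refinement claim, the base case $\ell=0$ is immediate since both clusterings are singletons. For the inductive step I would consider a pair of clusters $C, C' \in \cluster{t}{\ell}$ that wind up in the same cluster of $\cluster{t}{\ell+1}$ through a sequence of merges, and show that the corresponding ``parents'' $\tilde{C}, \tilde{C}' \in \cluster{t-1}{\ell}$ supplied by the IH (with $\vx{C} \sse \vx{\tilde{C}}$ and $\vx{C'} \sse \vx{\tilde{C}'}$) also lie in a single cluster of $\cluster{t-1}{\ell+1}$. The essential observation is that $\ell$-non-deadness under $\thrt{t}{}$ implies $\ell$-non-deadness under $\thrt{t-1}{}$. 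For Phase-1 merges via $e \in E(\cluster{t-1}{\ell+1})$, either both endpoints already lie in one $\cluster{t-1}{\ell}$ cluster (forcing $\tilde{C}=\tilde{C}'$) or $e$ was added precisely at level $\ell+1$ of $\cluster{t-1}{}$ and connects $\tilde{C}$ to $\tilde{C}'$. For Phase-2 merges by distance, $d(\tilde{C}, \tilde{C}') \leq d(C,C') \leq 2^\ell$, and $\ell$-non-deadness of both forces them to be merged (possibly transitively) in $\cluster{t-1}{}$'s Phase~2.

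For the second claim, I would first argue that $\cluster{t}{\ell'}$ and $\cluster{t-1}{\ell'}$ have identical vertex-set partitions for every $\ell' \leq \ls$. This uses that at these levels a vertex $v$ satisfies $\thrt{t}{v} > \ell'$ iff $\thrt{t-1}{v} > \ell'$ (since lowering any threshold down only to $\ls > \ell'$ cannot push it below $\ell'$), so every cluster has the same non-dead status under old and new thresholds and the algorithm makes exactly the same merges. Each cluster in $\sZ_t$ (a cluster of $\Cnew_{\ls}$, which has the same vertex sets as $\cluster{t-1}{\ls}$) therefore appears as a cluster of $\cluster{t}{\ls}$; by construction of the new thresholds, all its vertices are deleted and have $\thrt{t}{v} \leq \ls$, so it is dead there. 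Since the two while loops at higher levels never touch dead clusters, these $\sZ_t$ clusters persist unchanged (and dead) in every $\cluster{t}{\ell}$ with $\ell \geq \ls$.

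For the third claim, I would split on $\ell$. When $\ell < \ls$, the partitions agree, so $p=1$ and $C_i = C$; non-deadness of $C$ transfers to $C_i$ because any non-dead witness $v$ still satisfies $\thrt{t}{v} > \ell$ (using $\ls > \ell$ to absorb the threshold lowering for $v \in Z_t$, and $\thrt{t}{t} \geq \ls > \ell$ to absorb the alive-to-zombie flip for $v=t$). When $\ell \geq \ls$, each $C_i$ is either a $\sZ_t$ cluster (immediate) or has $\vx{C_i} \cap Z_t = \emptyset$ (because clusterings partition $V$), so its thresholds are unchanged from the old setting. In the latter subcase, $C_i$ corresponds to a maximal sub-component of $C$'s level-$\ls$ sub-skeleton after excising the $\sZ_t$ nodes, and I would argue this component must contain a non-dead witness at level $\ell$ inherited from $C$'s non-deadness. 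The main obstacle will be this last inheritance step: ruling out that a non-$\sZ_t$ sub-component of $C$ gets ``drained of witnesses.'' I expect to handle it by tracking the merge order in $\cluster{t-1}{}$ — any component of $C$'s skeleton separated from all $\sZ_t$ nodes must itself contain some vertex with $\thrt{t-1}{v} > \ell$ or $\bt{t-1}{v} = 1$ (otherwise the whole component would be dead at level $\ell$ in $\cluster{t-1}{}$ too, contradicting its being a merged part of the non-dead $C$), and since its vertices avoid $Z_t$, that witness survives to give non-deadness of $C_i$ under $\thrt{t}{}$.
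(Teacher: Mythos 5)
Your proof of the first two parts tracks the paper's proof closely. For the refinement claim the paper also inducts on $\ell$, using the bijection at levels $\leq \ls$ as the base, and the same key observation you identify (non-deadness under $\thrt{t}{}$ implies non-deadness under $\thrt{t-1}{}$). The paper actually only spells out the distance-based (Phase-2) merge and is terser on Phase~1, so you are slightly more thorough there. For the $\sZ_t$-persistence claim your argument is the same as the paper's: bijection at levels $\leq\ls$ gives these as clusters of $\cluster{t}{\ls}$, dead there by construction, and dead clusters never merge.

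For the third claim you take a much longer road than the paper, and your self-diagnosed worry is pointing at a real subtlety. The paper's whole argument is: if $C_i$ is dead and $C_i\notin\sZ_t$ then (using the second part) $\vx{C_i}\cap Z_t=\emptyset$, so every vertex of $C_i$ has the same bit and threshold at $t-1$ as at $t$; hence ``$C_i$ was already dead in $\cluster{t-1}{}$, in which case $C=C_i$ and $C$ will be dead as well,'' contradiction. Your elaborate skeleton/component-excision argument is really an attempt to justify the ``in which case $C=C_i$'' step, which the paper asserts without proof. The obstacle you name --- a non-$\sZ_t$ sub-piece of $C$ getting ``drained of witnesses'' --- is exactly the gap: a piece $C_i$ disjoint from $Z_t$ may have formed at a low level while it still had a zombie witness, gone dead by level $\ell$, and at time $t-1$ nevertheless merged (through the now-removed $\sZ_t$ cluster) into a non-dead $C$, in which case $C_i\neq C$ and the paper's contradiction does not arrive. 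Neither your sketch nor the paper's one-line case~(a) closes this. (In the paper's downstream use in Lemma~\ref{lem:safe-stable} the restriction to \emph{safe} edges rules out exactly this configuration, which is presumably why the gap is harmless there.) So: correct and essentially the same for parts one and two; for part three you have correctly located the delicate step, but the route you propose is longer than the paper's and, like the paper's, does not fully resolve it.
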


\begin{proof}
  The lemma is immediate in Case~I, so assume we are in Case~II. We
  prove the lemma by induction on $\ell$. If $\ell \leq \ls$, then since
  we only reduced the thresholds of some nodes down to $\ls$ and changed
  $b_t = 0$, clusters that were non-dead according to
  $(\thrt{t-1}{\ell}, \bt{t-1}{\ell})$ are also non-dead according to
  $(\thrt{t}{\ell}, \bt{t}{\ell})$, and there is no change in the
  actions of \textbf{FormClusterNew} for these ``low'' levels. For such
  levels $\ell \leq \ls$, there is a bijection between clusters in
  $\cluster{t-1}{\ell}$ and $\cluster{t}{\ell}$.

  Now consider level $\ell > \ls$, and assume the first (refinement)
  claim is true for level $\ell - 1$. Now suppose we add an edge between the spanning 
trees $\tr{C_1}$ and $\tr{C_2}$ of 
  two clusters $C_1, C_2$ respectively  in  $\cluster{t}{\ell-1}$ (to form a cluster at
  level $\ell$), and hence $d(C_1, C_2) \leq 2^\ell$. By the inductive
  hypothesis, $\vx{C_1}$ is contained in  $\vx{C_1'}$, $C_1' \in
  \cluster{t-1}{\ell-1}$ and $\vx{C_2}$ is contained in  $\vx{C_2'}, C_2' \in
  \cluster{t-1}{\ell-1}$, and so $d(C_1',C_2') \leq 2^{\ell}$. Hence, 
  $C_1'$ and $C_2'$ will become part of the same cluster in 
  $\cluster{t-1}{\ell}$. This proves the refinement claim for level
  $\ell$.

  For the second statement, the clusters in $\sZ_t \sse
  \cluster{t-1}{\ls}$ all exist in $\cluster{t}{\ls}$ (by the above
  claim about a bijection for levels $\ell \leq \ls$), and they are all
  $\ls$-dead in $\cluster{t}{\ls}$ (by construction of $\thrt{t}{}$), so
  they will remain in all subsequent levels $\ell \geq \ls$.

  For the last statement, for $\ell \geq \ls$, suppose $C \in
  \cluster{t-1}{\ell}$ corresponds to $C_1, C_2, \ldots, C_p \in
  \cluster{t}{\ell}$, and $C_i$ is dead.  The only possibilities for
  $C_i$ are (a) it was already dead in $\cluster{t-1}{}$, in which case $C
  = C_i$ and $C$ will be dead as well, which is a contradiction, or
  (b)~$C_i$ is one of the clusters in $\sZ_t$.
\end{proof}

\subsubsection{Safe Edges}

The results of this section are interesting only when we are in Case~II,
and $\ls$ is defined.  Consider the skeleton $\G_\ls(\cluster{t-1}{})$;
recall that the clusters in $\sZ_t$ correspond to degree-one or degree-two nodes in
this graph. We define a set of \emph{boundary} clusters $\B_t$ to be
those clusters in $\G_\ls(\cluster{t-1}{})$ that are not in $\sZ_t$ but
have at least one cluster in $\sZ_t$ as a neighbor. Hence $|\B_t| \leq
2|\sZ_t|$.

An edge $e$ in $E(\cluster{t-1}{s_{t-1}})$ (which is the set of all
edges added in the hierarchical clustering $\cluster{t-1}{}$, regardless
of whether it was part of $F_{t-1}$ or not) is called \emph{safe} if
either (a)~$e$ belongs to $E(\cluster{t-1}{\ls})$, i.e., it was added at
level $\ell \leq \ls$, or (b)~at least one endpoint of $e$ belongs to a
cluster not in $\sZ_t \cup \B_t$. In other words, an edge is unsafe if
and only if it belongs to $\G_\ls(\cluster{t-1}{})$ and both endpoints
fall in clusters in $\sZ_t \cup \B_t$.

\begin{fact}
  \label{fct:most-safe}
  At most $3|\sZ_t|-1$ edges are unsafe.
\end{fact}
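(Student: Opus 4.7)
My plan is to exploit two structural facts: (a) the skeleton $\G_\ls(\cluster{t-1}{})$ is a forest (stated just before the definition of the skeleton), and (b) every cluster in $\sZ_t$ has degree at most two in this skeleton (by the choice rule in Case~II of the algorithm).

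First I would reformulate the definition in graph-theoretic terms: an edge of $E(\cluster{t-1}{s_{t-1}})$ is unsafe iff the corresponding edge of the skeleton $\G_\ls(\cluster{t-1}{})$ has both its endpoints among the vertex set $\sZ_t \cup \B_t$. Hence the unsafe edges are exactly the edges of the subgraph of $\G_\ls(\cluster{t-1}{})$ induced on $\sZ_t \cup \B_t$. Since an induced subgraph of a forest is itself a forest, the number of unsafe edges is at most $|\sZ_t \cup \B_t| - 1$.

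Next I would bound $|\sZ_t \cup \B_t|$. By the definition of $\B_t$, each boundary cluster is a skeleton-neighbor of some cluster in $\sZ_t$, so $|\B_t|$ is at most the total degree in $\G_\ls(\cluster{t-1}{})$ of the clusters in $\sZ_t$. Because every cluster in $\sZ_t$ was chosen to have degree $1$ or $2$ in the skeleton, this total degree is at most $2|\sZ_t|$, giving $|\B_t| \leq 2|\sZ_t|$ and therefore $|\sZ_t \cup \B_t| \leq 3|\sZ_t|$. Combining with the forest bound yields at most $3|\sZ_t| - 1$ unsafe edges.

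I do not anticipate a real obstacle here; the only thing to be careful about is making sure the correspondence between edges in $E(\cluster{t-1}{s_{t-1}}) \setminus E(\cluster{t-1}{\ls})$ and edges of the skeleton is used correctly (so that "both endpoints in $\sZ_t \cup \B_t$" is the same as "the induced skeleton edge lies on $\sZ_t \cup \B_t$"), and that one does not over-count by including the level-$\leq \ls$ edges, which are declared safe by clause~(a) of the definition.
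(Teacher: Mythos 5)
Your argument matches the paper's proof almost verbatim: both bound the unsafe edges by $|\sZ_t \cup \B_t| - 1$ using that the skeleton $\G_\ls(\cluster{t-1}{})$ is a forest, and both bound $|\B_t| \leq 2|\sZ_t|$ via the degree-$\leq 2$ choice of the clusters in $\sZ_t$. The proof is correct and follows essentially the same route.
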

\begin{proof}
  The edges in $\G_\ls(\cluster{t-1}{})$ form a forest, and an unsafe
  edge is a subset of these edges that has both endpoints in $\sZ_t \cup
  \B_t$. So there are at most $|\sZ_t \cup \B_t| - 1$ unsafe edges.
  Moreover, each cluster in $\sZ_t$ has at most two neighbors, so
  $|\B_t| \leq 2|\sZ_t|$, which proves the claim.
\end{proof}

\begin{lemma}
  \label{lem:safe-stable}
  If $e \in E_\ell(\cluster{t-1}{})$ is a safe edge, then $e \in
  E_\ell(\cluster{t}{})$. In other words, every safe edge added at level
  $\ell$ at time $t-1$ is added at level $\ell$ at time $t$.
\end{lemma}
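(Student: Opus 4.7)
The plan is to prove the lemma by induction on the level $\ell$, handling Cases~I and~II of the algorithm separately. In Case~I no thresholds are reduced, and the only change is $b_t$ going from $1$ to $0$; this may flip one cluster's status from alive to zombie but never from non-dead to dead. Since FormClusterNew takes $\C' = \cluster{t-1}{}$ as input and only examines the ``non-dead'' predicate at each level, the decisions at every iteration are identical at times $t-1$ and $t$, so trivially every edge in $E_\ell(\cluster{t-1}{})$ lies in $E_\ell(\cluster{t}{})$.

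The interesting case is Case~II. For levels $\ell \leq \ls$, none of the new thresholds $\thrt{t}{v} = \min(\thrt{t-1}{v}, \ls)$ drop strictly below $\ls$, so every vertex's alive/zombie/dead status at these low levels is unchanged between $\thrt{t-1}{}$ and $\thrt{t}{}$ (with one possible alive$\to$zombie swap for vertex $t$, which stays non-dead). Combined with Lemma~\ref{lem:refinement}, this gives a bijection between the clusters of $\cluster{t-1}{j}$ and $\cluster{t}{j}$ for $j \leq \ls$, and the phase-1/phase-2 decisions of FormClusterNew agree on this common structure. In particular, safe edges of type~(a) are added at the same level.

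For the inductive step with $\ell > \ls$, let $e=(u,v)\in E_\ell(\cluster{t-1}{})$ be safe, and let $D_u, D_v \in \cluster{t-1}{\ls}$ be the level-$\ls$ clusters containing $u,v$. By safety, say $D_u \notin \sZ_t \cup \B_t$. Since $(D_u,D_v)$ is an edge of $\G_\ls(\cluster{t-1}{})$ and $D_u$ has no $\sZ_t$-neighbor, we also get $D_v \notin \sZ_t$. Hence neither $u$ nor $v$ lies in any cluster of $\sZ_t$. By Lemma~\ref{lem:refinement}, the clusters $A^t_{\ell-1}(u)$ and $A^t_{\ell-1}(v)$ containing $u,v$ in $\cluster{t}{\ell-1}$ are refinements of $A^{t-1}_{\ell-1}(u) \neq A^{t-1}_{\ell-1}(v)$, and so are themselves distinct and $(\ell-1)$-non-dead. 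When FormClusterNew reaches iteration $\ell-1$ at time $t$, phase~1 processes the edges of $E(\C'_\ell) = E(\cluster{t-1}{\ell})$; the edge $e$ appears there, its two endpoints sit in different non-dead clusters, and so $e$ is used to perform a phase-1 merge — putting $e$ into $E_\ell(\cluster{t}{})$.

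The main obstacles are two ``interference'' issues. First, I must rule out $e$ being added at some level $\ell' < \ell$ in $\cluster{t}{}$; this follows because $e \notin E(\cluster{t-1}{\ell'})$ rules out phase~1 using $e$ early, while the refinement $A^t_j(\cdot) \subseteq A^{t-1}_j(\cdot)$ and the fact that $d(A^{t-1}_j(u), A^{t-1}_j(v)) > 2^j$ for $j<\ell-1$ (otherwise $e$ would have been added earlier at time $t-1$) gives $d(A^t_j(u), A^t_j(v)) > 2^j$, blocking phase~2 from accidentally bringing $u$ and $v$ into a common cluster. Second, I need to be sure that within phase~1 of iteration $\ell-1$ at time $t$, some other edge does not merge $A^t_{\ell-1}(u)$ and $A^t_{\ell-1}(v)$ before $e$ is processed; this requires a consistent processing order — e.g., processing the edges of $E(\cluster{t-1}{\ell})$ in the same order in which they were added at time $t-1$, so that the phase-1 reconstruction mirrors the original bottom-up assembly — at which point $e$ is the unique bridge between these two sub-clusters in the spanning tree, and no earlier phase-1 edge can pre-empt it.
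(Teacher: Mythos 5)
Your proposal tracks the paper's own proof quite closely: the trivial Case~I observation, the level-$\ell \le \ls$ bijection, the argument that both endpoints lie outside $Z_t$, and the use of Lemma~\ref{lem:refinement} to conclude that the level-$(\ell-1)$ clusters containing $u$ and $v$ at time $t$ are distinct and non-dead. The point where you diverge is the final step. You correctly worry that some other phase-1 edge might already have merged $A^t_{\ell-1}(u)$ and $A^t_{\ell-1}(v)$ before $e$ is examined, but you then resolve this by \emph{postulating} a particular processing order on $E(\cluster{t-1}{\ell})$. That hypothesis is not stated in the algorithm and, more to the point, it is not needed. The paper's one-line justification --- ``adding $e$ will not create a cycle, by refinement'' --- is an order-independent argument: contract the partial phase-1 state further down to the partition $\cluster{t-1}{\ell-1}$. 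Each cluster of $\cluster{t-1}{\ell-1}$ is a connected subtree of $E(\cluster{t-1}{\ell-1}) \sse E(\cluster{t-1}{\ell})$, so the contracted multigraph of $E(\cluster{t-1}{\ell})$ over $\cluster{t-1}{\ell-1}$ is again a forest, and in it $e$ is the unique edge on the path between $C_{t-1}$ and $C_{t-1}'$. Since any phase-1 edges at time $t$ project into this forest (via the refinement), no collection of them can put $u$ and $v$ into a common component without using $e$ itself. Hence the while-loop must eventually add $e$ regardless of order.

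Two smaller remarks. Your first ``interference'' argument (the explicit phase-2 distance bound showing $d(A^t_j(u),A^t_j(v))>2^j$) is correct but redundant: Lemma~\ref{lem:refinement} already gives directly that $u$ and $v$ lie in distinct clusters of $\cluster{t}{j}$ for every $j \le \ell-1$, so $e$ is not added at a lower level for any reason. And you should still justify that $A^t_{\ell-1}(u), A^t_{\ell-1}(v)$ are non-dead rather than asserting it; the reason, as in the paper, is the contrapositive of the last part of Lemma~\ref{lem:refinement}: if one of them were dead and not in $\sZ_t$ (which it cannot be, since its vertices avoid $Z_t$), the coarser cluster at time $t-1$ would also be dead, contradicting the fact that $e$ was added across it. With those fixes your proof is sound and is essentially the paper's argument.
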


\begin{proof}
  For $\ell \leq \ls$, this follows because the algorithms at time $t-1$
  and time $t$ behave the same until level $\ls$: every edge is safe,
  and is added at the same time.  For $\ell > \ls$, consider a safe edge
  $e = (x,y) \in E_\ell(\cluster{t-1}{})$ going between clusters
  $C_{t-1}, C_{t-1}' \in \cluster{t-1}{\ell}$. Let $C_{t}, C_{t}'$ be
  the clusters in $\cluster{t}{\ell}$ containing $x$ and $y$
  respectively.  Lemma~\ref{lem:refinement} implies that $C_{t}
  \subseteq C_{t-1}$ and $C_{t}' \subseteq C_{t-1}'$.

  First, observe that $x, y \notin Z_t$ (where $Z_t$ is the set of
  vertices lying in the clusters of $\sZ_t$). Indeed, if $x \in Z_t$,
  then the cluster in $\cluster{t-1}{\ls}$ containing $x$ at level $\ls$
  would belong to $\sZ_t$, and then $y \in \sZ_t \cup \B_t$, since
  $\B_t$ contains all the neighboring clusters of $\sZ_t$ in
  $\cluster{t-1}{}$. This contradicts $(x,y)$ being safe.

  We now claim that both $C_{t}$ and $C_{t}'$ are non-dead in
  $\cluster{t}{\ell}$. 
  Suppose $C_{t}$ was dead. Note that  $C_{t}
  \not\in \sZ_t$ because $x \notin Z_t$. Therefore, the second part of Lemma~\ref{lem:refinement} implies
  that $C_{t-1}$ would be dead in $\cluster{t-1}{\ell}$. But then the
  edge $(x,y)$ would not be added, a contradiction. A similar argument
  shows that $C_{t}'$ is not dead. Moreover, since the clustering at time $t$ is a
  refinement of that at time $t-1$ (again by
  Lemma~\ref{lem:refinement}), adding the edge $e$ to $E_\ell(\cluster{t-1}{})$ will not create a cycle.
  Hence, we will add $e$ to $E_\ell(\cluster{t}{})$. 
\end{proof}

To summarize, there are very few edges that are unsafe
(Fact~\ref{fct:most-safe}), and safe edges are added at the same level at
timestep $t$ as at timestep $t-1$. This will be useful to show that the
edge set in consecutive steps remains pretty similar.

\subsubsection{Bounding the Changes}

Let us define some syntactic sugar. Let the number of alive clusters in
$\cluster{t}{\ell}$ be denoted $a_{t,\ell}$ instead of
$\nalive(\cluster{t}{\ell})$. Let the number of edges added at levels
above $\ell$ in $\cluster{t}{}$ be denoted by $m_{t, >\ell}$ instead of
$m_{>\ell}(\cluster{t}{})$.

\begin{lemma}
  \label{lem:alive}
  For all levels $\ell$, $a_{t,\ell} \geq
  a_{t-1,\ell} - 1$.
\end{lemma}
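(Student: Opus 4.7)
The plan is to use the refinement structure from Lemma~\ref{lem:refinement} together with the observation that only the single vertex $t$ changes its alive status between times $t-1$ and $t$.

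First, I would set up the correspondence: by Lemma~\ref{lem:refinement}, each cluster $C \in \cluster{t-1}{\ell}$ is partitioned into some sub-clusters $C_1, \ldots, C_p \in \cluster{t}{\ell}$. Since the only change in the bit vector is $\bt{t}{t} = 0$ (with $\bt{t}{v} = \bt{t-1}{v}$ for all $v \neq t$), no vertex becomes alive that was not alive before. Hence every alive cluster in $\cluster{t}{\ell}$ must be a sub-cluster of some alive cluster in $\cluster{t-1}{\ell}$ (a sub-cluster of a zombie or dead parent contains no alive vertex).

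Next, I would argue that each alive cluster in $\cluster{t-1}{\ell}$ contributes at least one alive sub-cluster in $\cluster{t}{\ell}$, with at most one exception. Consider an alive $C \in \cluster{t-1}{\ell}$. If $C$ contains an alive vertex $v \neq t$, then $v$ remains alive at time $t$ (we did not change $\bt{t}{v}$, and reducing thresholds does not affect the $b$-bit), so the unique sub-cluster $C_i \in \cluster{t}{\ell}$ with $v \in \vx{C_i}$ is alive. The only way $C$ fails to produce an alive sub-cluster is if $t$ is the sole alive vertex in $\vx{C}$. Since the vertex $t$ lies in exactly one cluster of the partition $\cluster{t-1}{\ell}$, this happens for at most one $C$.

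Finally, I would note that distinct alive parents yield distinct alive sub-children, because refinement implies each sub-cluster is contained in exactly one parent cluster. Combining, the map sending each alive cluster $C \in \cluster{t-1}{\ell}$ (with the possible exception of the one containing $t$ as its unique alive vertex) to an alive sub-cluster in $\cluster{t}{\ell}$ is injective, giving $a_{t,\ell} \geq a_{t-1,\ell} - 1$. The argument is essentially bookkeeping; I expect no real obstacle beyond carefully invoking both parts of Lemma~\ref{lem:refinement} to rule out the worry that reducing thresholds on $Z_t$ could somehow destroy alive clusters (which it cannot, because reducing thresholds only affects deleted vertices' zombie/dead classification, not any alive vertex).
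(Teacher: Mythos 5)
Your proof is correct and follows essentially the same route as the paper's: invoke the refinement property from Lemma~\ref{lem:refinement}, observe that only vertex~$t$ changes alive status, and conclude that at most one alive cluster of $\cluster{t-1}{\ell}$ fails to yield an alive sub-cluster in $\cluster{t}{\ell}$. The paper states this in one sentence; your added detail about injectivity of the child-to-parent map and the non-effect of threshold reductions on alive vertices merely fills in the bookkeeping that the paper leaves implicit.
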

\begin{proof}
  The clustering $\cluster{t}{\ell}$ is a refinement of the clustering
  $\cluster{t-1}{\ell}$, so each alive cluster in $\cluster{t-1}{\ell}$
  gives rise to at least one alive cluster in $\cluster{t}{\ell}$ ---
  except for the cluster containing vertex $t$, which might become a
  zombie at time $t$, and accounts for the subtraction of one.
\end{proof}

\begin{lemma}
  \label{lem:forest}
  The difference in the total number of edges added at timesteps $t-1$ and
  $t$ is
  \begin{gather}
    | F_{t-1}' | - | F_t' | \geq
    |\sZ_t|/2 = 3.
    \label{eq:7}
  \end{gather}
  Moreover:
  \begin{alignat}{3}
    m_{t, >\ell} &\leq m_{t-1, >\ell} -
    |\sZ_t|/2 &  &\leq m_{t-1, >\ell} - 3 &\quad &\forall \ell \leq \ls \label{eq:1} \\
    m_{t, >\ell} &\leq  m_{t-1, >\ell} + 3|\sZ_t|
    & &\leq m_{t-1, >\ell} + 18
    &\quad &\forall \ell > \ls \label{eq:2}
  \end{alignat}
\end{lemma}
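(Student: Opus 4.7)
The plan is to reduce all three inequalities to comparing the number $N_t(\ell)$ of non-dead clusters in $\cluster{t}{\ell}$ between timesteps $t-1$ and $t$, using the identity $m_{t,>\ell} = N_t(\ell) - 1$ (valid because the algorithm keeps merging non-dead clusters until exactly one non-dead cluster remains at level $s_t$, so each edge added above $\ell$ corresponds to one such merge).

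To handle~(\ref{eq:1}) and~(\ref{eq:7}) together, I would invoke Lemma~\ref{lem:refinement}: since both the threshold reductions and the flip $b_t=0$ only affect behavior at level $\ls$ and above, there is a level-by-level bijection between clusters of $\cluster{t-1}{j}$ and $\cluster{t}{j}$ carrying the same tree edges for every $j \le \ls$. Hence $m_j(\cluster{t}{}) = m_j(\cluster{t-1}{})$ for $j \le \ls$, and for any $\ell \le \ls$,
\[ m_{t,>\ell} - m_{t-1,>\ell} \;=\; m_{t,>\ls} - m_{t-1,>\ls}. \]
At level $\ls$, exactly the $|\sZ_t|$ chosen zombie clusters become dead and no other cluster changes from non-dead to dead there, so $N_t(\ls) = N_{t-1}(\ls) - |\sZ_t|$; the identity above then gives $m_{t,>\ls} - m_{t-1,>\ls} = -|\sZ_t|$, which proves~(\ref{eq:1}). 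Specializing to $\ell = 0$ and summing over all levels yields~(\ref{eq:7}).

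The substantive work is~(\ref{eq:2}). Using the same identity, what we want is $N_t(\ell) - N_{t-1}(\ell) \le 3|\sZ_t|$. By Lemma~\ref{lem:refinement}, each non-dead $C \in \cluster{t-1}{\ell}$ refines into $p_C$ clusters in $\cluster{t}{\ell}$, of which $z_C$ belong to $\sZ_t$ (and are dead) while the remaining $p_C - z_C$ are non-dead, so $N_t(\ell) - N_{t-1}(\ell) = \sum_C (p_C - z_C - 1)$. To bound $p_C$, I would examine the sub-tree of the skeleton $\G_\ls(\cluster{t-1}{})$ corresponding to $C$: it spans the $k_C$ level-$\ls$ clusters inside $C$ with $k_C - 1$ edges. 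Writing $m_C$ for the number of those edges incident to $\sZ_t$, removing the $z_C$ vertices of $\sZ_t \cap C$ leaves a forest on $k_C - z_C$ vertices with $m_C - z_C + 1$ connected components. By Lemma~\ref{lem:safe-stable}, the surviving (safe) edges inside $C$'s subtree are reused at exactly the same levels in $\cluster{t}{}$, and the distance-based merges in \textbf{FormClusterNew} can only fuse components further, so $p_C - z_C \le m_C - z_C + 1$, i.e., $p_C - z_C - 1 \le m_C - z_C$. Because each cluster of $\sZ_t$ has degree at most $2$ in the skeleton, $\sum_C m_C \le 2|\sZ_t|$, and therefore $N_t(\ell) - N_{t-1}(\ell) \le \sum_C (m_C - z_C) \le 2|\sZ_t| \le 3|\sZ_t|$.

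The main obstacle will be rigorously justifying $p_C - z_C \le m_C - z_C + 1$: one must use Lemma~\ref{lem:safe-stable} to place each safe edge of $C$'s subtree at the correct level in $\cluster{t}{}$, and then argue that the second (distance-based) while-loop of \textbf{FormClusterNew} cannot \emph{increase} the non-dead refinement count (it can only decrease it by merging further). A secondary care-point is that an $\sZ_t$ cluster may sit inside a cluster of $\cluster{t-1}{\ell}$ that is already dead (e.g., when its sole zombie vertex has threshold in $(\ls, \ell]$); such an $\sZ_t$ cluster contributes to neither $z_C$ nor $m_C$ for any non-dead $C$, so the sum $\sum_C m_C \le 2|\sZ_t|$ and the conclusion are unaffected.
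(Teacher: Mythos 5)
The central gap is the identity $m_{t,>\ell} = N_t(\ell) - 1$, where $N_t(\ell)$ counts the \emph{non-dead} clusters in $\cluster{t}{\ell}$. What is actually true is $m_{t,>\ell} = |\cluster{t}{\ell}| - |\cluster{t}{s_t}|$, where $|\cdot|$ is the \emph{total} number of clusters (this is the identity the paper uses, with $\ell_M = \max(s_{t-1}, s_t)$ in place of $s_t$), because each merge reduces the total cluster count by exactly one. These two identities coincide only if no cluster transitions from non-dead to dead between level $\ell$ and level $s_t$ within the single clustering $\cluster{t}{}$. But in the non-amortized algorithm thresholds are lowered to $\ls$ rather than to $0$, so such intermediate-level deaths do occur: any cluster that was placed in some $\sZ_{t'}$ at an earlier timestep $t'$ has a vertex whose threshold is exactly $\ls^{(t')}$, and that cluster (or a refined piece of it) is non-dead at levels below $\ls^{(t')}$ but becomes dead at level $\ls^{(t')}$ without being consumed by a merge. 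Writing $k_t(\ell)$ for the number of such intermediate deaths, the correct relation is $m_{t,>\ell} = N_t(\ell) - 1 - k_t(\ell)$ with $k_t(\ell) \geq 0$; your reasoning implicitly assumes $k_t(\ell) = 0$.

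Concretely, for (\ref{eq:1}) you assert the \emph{equality} $m_{t,>\ls} - m_{t-1,>\ls} = N_t(\ls) - N_{t-1}(\ls) = -|\sZ_t|$. The first equality fails in general: the true quantity is $m_{t,>\ls} - m_{t-1,>\ls} = -\bigl(|\cluster{t}{\ell_M}| - |\cluster{t-1}{\ell_M}|\bigr)$, and the paper establishes only the one-sided bound $|\cluster{t}{\ell_M}| - |\cluster{t-1}{\ell_M}| \geq |\sZ_t|/2$ (the six $\sZ_t$ clusters become isolated, and in the worst case they formed a matching in the skeleton). Your derivation of (\ref{eq:2}) has the same exposure since the unknown correction terms $k_t(\ell) - k_{t-1}(\ell)$ are not controlled. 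The skeleton decomposition you sketch for (\ref{eq:2}) is also considerably heavier than needed; the paper gets (\ref{eq:2}) directly from Lemma~\ref{lem:safe-stable} and Fact~\ref{fct:most-safe}: all safe edges added at levels $\leq \ell$ at time $t-1$ are added at the same levels at time $t$, $|F_t'| \leq |F_{t-1}'|$ by refinement, and at most $3|\sZ_t|$ edges are unsafe, giving $m_{t,>\ell} \leq m_{t-1,>\ell} + 3|\sZ_t|$ immediately. To repair your argument you would need to replace $N_t(\ell)$ by the total cluster count throughout, at which point it essentially becomes the paper's proof.
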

\begin{proof}
  Pick level $\ell_M = \max(s_{t-1}, s_t)$. We claim the difference
  in the number of clusters at level $\ell_M$ is
  \begin{gather}
    |\cluster{t}{\ell_M}| - |\cluster{t-1}{\ell_M}| \geq |\sZ_t|/2
    .\label{eq:4}
  \end{gather}
  To see this, observe that the vertex set in each cluster in $\cluster{t-1}{\ell_M}$ is
  union of the vertex sets of some clusters in $\cluster{t}{\ell_M}$ by
  Lemma~\ref{lem:refinement}, so the difference above is definitely
  non-negative. Moreover, each of the clusters in $\sZ_t$ forms an
  isolated cluster in $\cluster{t}{\ell_M}$, but it used to have
  positive degree in $\cluster{t-1}{\ell_M}$. The extreme case is when
  these clusters induce a matching, but that still increases the number
  of clusters by $|\sZ_t|/2$. This proves~(\ref{eq:4}).

  For any level $\ell$, the quantity $m_{t-1,>\ell} = m_{>\ell}(\cluster{t-1}{\ell})$ is
  the number of edges added above level $\ell$, which is equal to the
  reduction in the number of clusters above this level. Hence
  $m_{t-1,>\ell} = |\cluster{t-1}{\ell}| -
  |\cluster{t-1}{\ell_M}|$. Similarly $m_{t,>\ell} =
  |\cluster{t}{\ell}|-|\cluster{t}{\ell_M}|$.  Since
  $|\cluster{t-1}{\ell}|=|\cluster{t}{\ell}|$ for $\ell \leq \ls$,
  we have
  \stocoption{
    \begin{multline*} m_{t-1, >\ell} - m_{t, >\ell} \\ =
      (|\cluster{t-1}{\ell}| - |\cluster{t-1}{\ell_M}|) -
      (|\cluster{t}{\ell}| - |\cluster{t}{\ell_M}|) \geq |\sZ_t|/2, 
    \end{multline*}
}{
    \[ m_{t-1, >\ell} - m_{t, >\ell} = (|\cluster{t-1}{\ell}| -
    |\cluster{t-1}{\ell_M}|) - (|\cluster{t}{\ell}| -
    |\cluster{t}{\ell_M}|) \geq |\sZ_t|/2, \] } the last
  from~(\ref{eq:4}). This proves~(\ref{eq:1}). Also, $|F_{t-1}'| -
  |F_t'| = m_{t-1,\geq0} - m_{t,\geq 0} = m_{t-1,>\ls} - m_{t,>\ls} \geq
  |\sZ_t|/2$, and so~(\ref{eq:7}) also follows.

  For a level $\ell \geq \ls$, all the safe edges at $\ell$ and lower
  levels in time $t-1$ get added at the corresponding level in  time $t$ as well (Lemma~\ref{lem:safe-stable}). To maximize the
  difference, it can only be the case that all the unsafe edges (of
  which there are at most $3|\sZ_t|$) might not have been added yet.
  This proves~(\ref{eq:2}). Plugging in $|\sZ_t| = 6$ gives the
  numerical values.
\end{proof}

\subsubsection{The Key Invariant}

We now prove the key invariant. In the amortized case, we could prove
that for each hierarchical clustering $\cluster{t}{}$, all levels were
good. In the non-amortized case, this will not be true. However, we will
show a slightly weaker invariant. Recall the notion of
goodness~(\ref{eq:6}): for any clustering $\C$, level $\ell$ is good if
$ m_{> \ell}(\C) \leq 3 \, \nalive(\C_\ell)$. Using our shorthand, a
good level $\ell$ for timestep $t$ means $m_{t, >\ell} \leq
3\,a_{t,\ell}$. What about bad levels?
\begin{lemma}[Invariant]
  \label{lem:invariant}
  For all timesteps $t$, if level $\ell$ is bad for
  $\cluster{t}{}$, and $m_{t, >\ell} \geq 36$.  Then
  \begin{eqnarray}
    \label{eq:invariant}
    m_{t, >\ell} \leq 3 a_{t,\ell} + 54.
  \end{eqnarray}
\end{lemma}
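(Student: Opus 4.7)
The plan is to induct on $t$, with the trivial base case $t=0$ since every cluster is alive, giving $m_{0,>\ell} < a_{0,\ell}$ and leaving no level bad. For the inductive step, fix a bad level $\ell$ in $\cluster{t}{}$ with $m_{t,>\ell} \geq 36$. If the algorithm just executed Case~I, then $\cluster{t}{} = \Cnew$, and the defining property of Case~I forces $m_{>\ell}(\Cnew) < 36$ at every bad level, immediately contradicting $m_{t,>\ell} \geq 36$. So the real work is in Case~II, which I would split by whether $\ell \leq \ls$ or $\ell > \ls$, invoking Lemma~\ref{lem:forest} to pass between $m_{t,>\ell}$ and $m_{t-1,>\ell}$, and Lemma~\ref{lem:alive} to pass between $a_{t,\ell}$ and $a_{t-1,\ell}$.

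For $\ell \leq \ls$, inequality~(\ref{eq:1}) gives $m_{t,>\ell} \leq m_{t-1,>\ell} - 3$, and I would further case on the status of level $\ell$ at time $t-1$. If it was good at $t-1$, then $m_{t-1,>\ell} \leq 3a_{t-1,\ell}$ combined with $a_{t-1,\ell} \leq a_{t,\ell}+1$ yields $m_{t,>\ell} \leq 3a_{t,\ell}$, contradicting badness at time $t$. If it was bad at $t-1$ with $m_{t-1,>\ell} \geq 36$, the inductive hypothesis gives $m_{t-1,>\ell} \leq 3a_{t-1,\ell}+54$, whence $m_{t,>\ell} \leq 3a_{t-1,\ell}+51 \leq 3a_{t,\ell}+54$. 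If it was bad at $t-1$ with $m_{t-1,>\ell} < 36$, then $m_{t,>\ell} < 33$, contradicting the hypothesis.

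For $\ell > \ls$, inequality~(\ref{eq:2}) gives $m_{t,>\ell} \leq m_{t-1,>\ell} + 18$. Crucially, $\Cnew$ and $\cluster{t-1}{}$ share the same clusters at every level (differing only in one cluster possibly being alive in one and zombie in the other), so $m_{>\ell}(\Cnew) = m_{t-1,>\ell}$ and $\nalive(\Cnew_\ell) \leq a_{t-1,\ell}$. Because $\ls$ is the \emph{highest} bad level of $\Cnew$ with $m_{>\ls}(\Cnew) \geq 36$, at level $\ell > \ls$ either $\ell$ is good in $\Cnew$ (giving $m_{t-1,>\ell} \leq 3\nalive(\Cnew_\ell) \leq 3a_{t-1,\ell}$ and hence $m_{t,>\ell} \leq 3a_{t,\ell}+21$) or $m_{t-1,>\ell} < 36$ (giving $m_{t,>\ell} < 54$); in both subcases the bound $3a_{t,\ell}+54$ is met.

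The main obstacle will be the bookkeeping across the three clusterings $\cluster{t-1}{}$, $\Cnew$, and $\cluster{t}{}$, and invoking the inductive hypothesis only when both of its premises (level $\ell$ bad at $t-1$ \emph{and} $m_{t-1,>\ell} \geq 36$) are satisfied. The threshold $36$ and the additive slack $54$ in the statement are chosen precisely to absorb both the $+18$ increase coming from Lemma~\ref{lem:forest}(\ref{eq:2}) above $\ls$ and the $+3$ adjustment from $a_{t-1,\ell} \leq a_{t,\ell}+1$ below.
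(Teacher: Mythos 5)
Your proposal is correct and follows essentially the same proof structure as the paper: induction on $t$, a quick dispatch of Case~I, and a split in Case~II between $\ell\leq\ls$ and $\ell>\ls$ using inequalities~(\ref{eq:1}),~(\ref{eq:2}), Lemma~\ref{lem:alive}, and the inductive hypothesis. Your only deviations are cosmetic: in a couple of subcases (Case~I, and good-at-$t-1$ for $\ell\leq\ls$) you observe that the premises of the invariant cannot even be satisfied and conclude by contradiction, whereas the paper simply verifies the bound $3a_{t,\ell}+54$ directly; both are fine.
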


\begin{proof}
  We prove this by induction on $t$. Initially, at time $t=0$, all
  vertices are alive. For any level $\ell$, the number of edges added
  above that level can be at most the number of components at that
  level. Thus $m_{0,> \ell} \leq a_{0,\ell} - 1$. This means all levels
  are good, and the invariant is vacuously true.

  Suppose~(\ref{eq:invariant}) holds true at some time $t-1$ for all bad
  levels $\ell$. We need to show that~(\ref{eq:invariant}) holds at time
  $t$ for all bad levels $\ell$ as well. If we were in Case~I, then we
  know that $ m_{t, >\ell} < 3 a_{t,\ell} +
  36$ (since either all levels of $\Cnew$ were good, or they had $ m_{>
    \ell}(\cluster{t}{}) < 36$).

  Hence we need to consider when we get to time $t$ using Case~II.  Let
  $\ls$ be as defined by the algorithm --- the highest bad
  level $\ell$ in the intermediate hierarchical clustering $\Cnew$ with
  $m_{> \ell}(\Cnew) = m_{t-1, >\ell} \geq 36$.

  Now take $\cluster{t}{}$, and first consider a bad level for some
  $\ell \leq \ls$. There are several cases.
  \begin{itemize}
  \item Suppose $\ell$ was a good level in $\cluster{t-1}{}$: by
    definition of goodness, $m_{t-1, >\ell} \leq 3
    a_{t-1,\ell}$. 
    Therefore,
    \stocoption{
    \begin{multline*}
      m_{t, >\ell}
      \stackrel{\small{(\ref{eq:1})}}{\leq}
      m_{t-1, >\ell} - 3 
      \stackrel{\small{\text{goodness}}}{\leq}
      3 a_{t-1,\ell} - 3 \\
      \stackrel{\small{\text{Lemma}~\ref{lem:alive}}}{\leq}
      3( a_{t,\ell} + 1) - 3 <  3a_{t,\ell} + 54 .
    \end{multline*}
  }{
    \begin{gather*}
      m_{t, >\ell}
      \stackrel{\small{(\ref{eq:1})}}{\leq}
      m_{t-1, >\ell} - 3
      \stackrel{\small{\text{goodness}}}{\leq}
      3 a_{t-1,\ell} - 3
      \stackrel{\small{\text{Lemma}~\ref{lem:alive}}}{\leq}
      3( a_{t,\ell} + 1) - 3 <  3a_{t,\ell} + 54 .
    \end{gather*}
  }
  \item Suppose $\ell$ was a bad level in $\cluster{t-1}{}$, but
    $m_{t-1,>\ell} < 36$: in this case,
    $$m_{t,>\ell}
      \stackrel{\small{(\ref{eq:1})}}{\leq}
      m_{t-1,>\ell} - 3 < 36 - 3, $$ and so the invariant
    holds trivially.
  \item Finally, suppose $\ell$ was a bad level in time $t-1$, and
    $m_{t-1,>\ell} \geq 36$: we can now apply the invariant at time $t-1$ to
    this level $\ell$. So, we get
    \stocoption{
      \begin{multline*}
    m_{t,>\ell}
      \stackrel{\small{(\ref{eq:1})}}{\leq}
      m_{t-1,>\ell} - 3
      \stackrel{\small{\text{invariant}}}{\leq}
      3 a_{t-1,\ell} + 54 - 3 \\
      \stackrel{\small{\text{Lemma}~\ref{lem:alive}}}{\leq}
      3 (a_{t,\ell}+1) + 54 - 3 \leq 3a_{t,\ell} + 54.
      \end{multline*}
      }{
    $$m_{t,>\ell}
      \stackrel{\small{(\ref{eq:1})}}{\leq}
      m_{t-1,>\ell} - 3
      \stackrel{\small{\text{invariant}}}{\leq}
      3 a_{t-1,\ell} + 54 - 3
      \stackrel{\small{\text{Lemma}~\ref{lem:alive}}}{\leq}
      3 (a_{t,\ell}+1) + 54 - 3 \leq 3a_{t,\ell} + 54.$$
    }
  \end{itemize}

  The other case to consider is when the bad level $\ell$ at time $t$
  satisfies $\ell > \ls$. We claim that such a level $\ell$ at time
  $t-1$ must have either been good, or satisfies $m_{t-1, >\ell} < 36$.
  Indeed, by the choice of $\ls$, if $m_{t-1, >\ell} \geq 36$ we must
  have had $m_{t-1,>\ell} \leq 3\nalive(\Cnew_\ell) \leq 3a_{t-1,\ell}$,
  and hence would be good. Hence we just have to consider these two
  cases.
  \begin{itemize}
  \item Suppose $\ell$ was good at time $t-1$, i.e., $m_{t-1, >\ell}
    \leq 3 a_{t-1,\ell}$. Then
   \stocoption{
    \begin{multline*}
    m_{t,>\ell}
    \stackrel{\small{(\ref{eq:2})}}{\leq}
    m_{t-1,>\ell} + 18
    \stackrel{\small{\text{goodness}}}{\leq}
    3 a_{t-1,\ell} + 18 \\
    \stackrel{\small{\text{Lemma}~\ref{lem:alive}}}{\leq}
    3 (a_{t,\ell} + 1) + 18 < 3a_{t,\ell} + 54.
    \end{multline*}
   }{
    $$ m_{t,>\ell}
    \stackrel{\small{(\ref{eq:2})}}{\leq}
    m_{t-1,>\ell} + 18
    \stackrel{\small{\text{goodness}}}{\leq}
    3 a_{t-1,\ell} + 18
    \stackrel{\small{\text{Lemma}~\ref{lem:alive}}}{\leq}
    3 (a_{t,\ell} + 1) + 18 < 3a_{t,\ell} + 54.$$
   }
  \item $m_{t-1, >\ell} < 36$: in this case,
    \[ m_{t,>\ell}
    \stackrel{\small{(\ref{eq:2})}}{\leq}
    m_{t-1,>\ell} + 18
    < 36 + 18  = 54 \leq 3a_{t,\ell} + 54. \]
  \end{itemize}
  This completes the proof of the invariant.
\end{proof}

To recap, the invariant says that for bad levels, the number of edges
added to $F_t'$ above that level is at most thrice the number of active
components plus an additive constant. This is contrast to good levels,
where the additive constant is missing.

\subsubsection{The Final Accounting}

\begin{lemma}[Lipschitz]
  \label{lem:small-change}
  The number of edges in $F_{t-1} \symdif F_t$ is at most $O(1)$.
\end{lemma}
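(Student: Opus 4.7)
The plan is to bound $|F_{t-1} \symdif F_t|$ via the triangle inequality
\[
|F_{t-1} \symdif F_t| \;\leq\; |F_{t-1}' \setminus F_{t-1}| \;+\; |F_{t-1}' \symdif F_t'| \;+\; |F_t' \setminus F_t|,
\]
using $F_\tau \subseteq F_\tau'$ for $\tau \in \{t-1, t\}$, and to argue each of the three summands is bounded by an absolute constant.

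For the two outer terms, $|F_\tau' \setminus F_\tau|$ equals $m_{\tau, > r_\tau}$, the number of edges added strictly above level $r_\tau$ in $\cluster{\tau}{}$. By definition of $r_\tau$ we have $a_{\tau, r_\tau} = 1$, so I would apply the invariant of Lemma~\ref{lem:invariant} at level $r_\tau$, splitting into the three sub-cases (level good, level bad with $m_{\tau, > r_\tau} \geq 36$, or level bad with $m_{\tau, > r_\tau} < 36$). Each sub-case gives $m_{\tau, > r_\tau} \leq \max(3 \cdot 1 + 54,\, 35) = 57$, so these two terms are $O(1)$.

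The main work is the middle term $|F_{t-1}' \symdif F_t'|$. In Case~I, $\cluster{t}{} = \Cnew$ has the same cluster trees at every level as $\cluster{t-1}{}$ (only alive/zombie labels may differ), so $F_t' = F_{t-1}'$ and this term vanishes. The interesting case, and the main obstacle, is Case~II. Here I would combine the safe-edge machinery with the counting lemma: Lemma~\ref{lem:safe-stable} together with Fact~\ref{fct:most-safe} says that all but at most $3|\sZ_t| - 1 = 17$ edges of $F_{t-1}'$ appear at the same level in $F_t'$, yielding $|F_{t-1}' \setminus F_t'| \leq 17$. For the reverse direction, Lemma~\ref{lem:forest} gives $|F_t'| - |F_{t-1}'| \leq -|\sZ_t|/2 = -3$, and combining with the lower bound $|F_t' \cap F_{t-1}'| \geq |F_{t-1}'| - 17$ yields $|F_t' \setminus F_{t-1}'| \leq |F_t'| - |F_{t-1}'| + 17 \leq 14$.

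Adding the three bounds gives $|F_{t-1} \symdif F_t| \leq 57 + 31 + 57 = O(1)$, as claimed. The crucial point that is exactly what the non-amortized algorithm of Section~\ref{sec:non-amortized-algo} was engineered to achieve: the zombie set $\sZ_t$ has constant size and the threshold is lowered only to $\ls$ rather than to $0$, so the unsafe edges are $O(1)$ in number and $|F_t'|$ shrinks by enough to absorb the newly introduced edges.
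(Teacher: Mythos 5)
Your proposal is correct and follows essentially the same route as the paper: the identical triangle-inequality decomposition, bounding the two outer terms $m_{\tau,>r_\tau}$ via the invariant of Lemma~\ref{lem:invariant} with $a_{\tau,r_\tau}=1$, and bounding $|F_{t-1}'\symdif F_t'|$ via Lemma~\ref{lem:safe-stable} and Fact~\ref{fct:most-safe}. The one small stylistic difference is that for $|F_t'\setminus F_{t-1}'|$ the paper simply invokes the refinement inequality $|F_t'|\leq|F_{t-1}'|$ to get $|F_t'\setminus F_{t-1}'|\leq |F_{t-1}'\setminus F_t'|\leq 17$, whereas you additionally invoke Lemma~\ref{lem:forest} to get the slightly sharper $\leq 14$; both are fine (and your $57$ per outer term is in fact the correct read of the invariant at $a=1$, where the paper writes $55$).
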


\begin{proof}
  Recall the difference between $F_t$ and $F_t'$ is that the latter
  contains edges added after there is a single alive cluster, and until
  there is a single non-dead cluster. In particular, the difference $|
  F_t' \setminus F_t| = m_{t,>r_t}$. By the invariant, since $a_{t,r_t}
  = 1$, this difference is at most $55$. Moreover,
  \stocoption{
  \begin{multline*}
    | F_{t-1} \symdif F_t | \leq | F'_{t-1} \symdif F'_t | + |F_t'
    \setminus F_t| + |F_{t-1}' \setminus F_{t-1}| \\ \leq | F'_{t-1}
    \symdif F'_t | + 110.
  \end{multline*}
  }{
  \[ | F_{t-1} \symdif F_t | \leq | F'_{t-1} \symdif F'_t | +
  |F_t' \setminus F_t| + |F_{t-1}' \setminus F_{t-1}| \leq | F'_{t-1}
  \symdif F'_t | + 110. \]
  }

  By the refinement property (Lemma~\ref{lem:refinement}) we know that
  $|F_t'| \leq |F_{t-1}'|$. And Lemma~\ref{lem:safe-stable} and
  Fact~\ref{fct:most-safe} show that $|F_{t-1}' \setminus F_{t}'|$ is at
  most $3|\sZ_t| -1 = 17$.  Hence,
  \[ | F'_{t-1} \symdif F'_t | = | F'_{t-1} \setminus F'_t | + | F'_t
  \setminus F'_{t-1} | \leq 2| F'_{t-1} \setminus F'_t | \leq 34. \]
  This proves the Lipschitz property.
\end{proof}

\begin{theorem}
  \label{thm:cost-change}
  For any time $t$, the cost of $F_t$ is at most $O(1)$ times the
  optimal Steiner tree cost on the non-deleted nodes $[t+1,n]$.
\end{theorem}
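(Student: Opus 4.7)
My plan is to adapt the cost analysis from the amortized setting (Lemma~\ref{lem:cost-amort}) using the weaker structural guarantee of the Invariant (Lemma~\ref{lem:invariant}) in place of exact goodness at every level. The competitive ratio will again be matched against a dual moat-packing lower bound, and the hierarchical nested cluster structure together with the separation guarantees of \textbf{FormClusterNew} will do the work.

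First I would observe that every edge added at level $\ell$ of \textbf{FormClusterNew} has length at most $2^\ell$: the second while loop merges clusters at metric distance at most $2^{\ell-1}$, and edges reused from $E(\C'_\ell)$ in the first while loop are, inductively, edges of the old clustering at some level $\leq \ell$. Hence $\cost(F_t) \leq \sum_{\ell=1}^{r_t} m_{t,\ell}\cdot 2^\ell$. Using $m_{t,\ell}=m_{t,>\ell-1}-m_{t,>\ell}$ and summation by parts, this telescopes to at most $2\, m_{t,>0} + \sum_{\ell=1}^{r_t-1} m_{t,>\ell}\cdot 2^\ell$. For every level $\ell < r_t$ the definition of $r_t$ forces $a_{t,\ell}\geq 2$, and in all three cases of Lemma~\ref{lem:invariant} (good levels, bad levels with $m_{t,>\ell}<36$, and bad levels with $m_{t,>\ell}\geq 36$) we get $m_{t,>\ell}\leq 3 a_{t,\ell}+54\leq 30\, a_{t,\ell}$. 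This yields $\cost(F_t)\leq O\bigl(\sum_{\ell=0}^{r_t-1} a_{t,\ell}\cdot 2^\ell\bigr)$.

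Next I would establish the matching lower bound by a standard dual moat-packing for the Steiner tree LP, following the same template used in Lemma~\ref{lem:cost-amort}. The crucial metric property is that distinct alive clusters at level $\ell$ of $\cluster{t}{}$ are pairwise at distance strictly more than $2^{\ell-1}$: otherwise the second while loop at iteration $\ell-1$ would have merged them. Moreover, alive clusters at level $\ell$ are nested inside alive clusters at level $\ell+1$. Assigning each alive cluster at level $\ell$ a dual of width $\Theta(2^\ell)$, chosen so that crossing contributions at any edge telescope geometrically and stay below the edge length, produces a feasible dual of total value $\Omega\bigl(\sum_\ell a_{t,\ell}\, 2^\ell\bigr)$, which lower-bounds the optimal Steiner tree cost on the alive set $A_t$. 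Chaining the two bounds gives $\cost(F_t)=O(1)\cdot\cost(\MST(A_t))$.

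The main obstacle is the additive $+54$ slack in the invariant, which is precisely the price Lemma~\ref{lem:invariant} pays to keep $|F_{t-1}\symdif F_t|$ bounded. Fortunately that slack only ever multiplies $2^\ell$ for $\ell<r_t$, and by definition of $r_t$ every such level has $a_{t,\ell}\geq 2$, so the additive constant gets absorbed into the leading coefficient. A secondary subtlety I would need to verify is that the first while loop of \textbf{FormClusterNew}, which reuses old-skeleton edges rather than picking the metrically closest pair, does not spoil the $2^{\ell-1}$-separation between alive clusters; but since that loop only merges clusters (never re-splitting them), the subsequent second loop still enforces the required metric separation, so the packing lower bound goes through as in the amortized analysis.
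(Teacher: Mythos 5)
Your proposal is correct and matches the paper's proof in substance: bound $\cost(F_t)$ above by $\sum_{\ell\leq r_t} m_{t,\ell}\,2^\ell$, convert this via Abel summation into $\Theta\bigl(\sum_{\ell<r_t} m_{t,>\ell}\,2^\ell\bigr)$, use the Invariant together with $a_{t,\ell}\geq 2$ for $\ell<r_t$ to dominate each $m_{t,>\ell}$ by $O(a_{t,\ell})$, and compare against the moat-packing lower bound $\Omega\bigl(\sum_{\ell<r_t} a_{t,\ell}\,2^\ell\bigr)$ coming from Fact~\ref{fct:separation}-style pairwise separation of alive clusters. The paper compresses all of this into three lines (citing Lemma~\ref{lem:cost-amort} and Lemma~\ref{lem:invariant}), so yours is simply a fleshed-out version of the same argument; the remark about absorbing the $+54$ slack using $a_{t,\ell}\geq 2$ is exactly the point the paper leaves implicit.
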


\begin{proof}
  The proof is very similar to that of Lemma~\ref{lem:cost-amort}. The
  lower bound on the Steiner tree is again at least
  \[ \sum_{\ell} 2^{\ell - 2} \cdot a_{t, \ell} \cdot
  \mathbf{1}(a_{t,\ell} \geq 2) \quad = \quad \sum_{\ell = 1}^{r_t - 1}
  2^{\ell - 2} \cdot a_{t, \ell}. \] The edges of the forest $F_t$
  output by our algorithm are added in levels $\ell \in \{1, \ldots,
  r_t\}$, and have total cost at most $\sum_{\ell = 1}^{r_t} 2^\ell
  \cdot m_{t,>\ell}$. By Lemma~\ref{lem:invariant}, these quantities are
  within a constant factor of each other, which completes the proof.
\end{proof}

\subsection{Elementary Fact}
\label{sec:elementary-facts}

Finally, one elementary fact, capturing that every forest must have a
large number of low-degree vertices.

\begin{fact}
  \label{fact:tree}
  Suppose we are given a forest $F$ with at least $36$ edges on some set
  $V$ of vertices, where $V$ is partitioned into sets $A$ and $B$. If
  the number of edges in $F$ is more than $3|A|$, then there must exist
  a set $S \sse B$ of $6$ nodes where the degrees of nodes in $S$ are
  either one or two.
\end{fact}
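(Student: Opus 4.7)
The plan is to first establish a clean lower bound on the number of degree-one-or-two vertices in any forest with $m$ edges, and then combine this bound with the two hypotheses $m \geq 36$ and $m > 3|A|$ via a short case split.

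Set up notation by letting $n_i$ denote the number of vertices of degree $i$ in $F$, letting $L := n_1 + n_2$ be the number of low-degree vertices and $H := \sum_{i \geq 3} n_i$ the number of high-degree vertices, and letting $c'$ denote the number of non-trivial tree components of $F$ (those containing at least one edge). Counting non-isolated vertices gives $L + H = m + c'$. Applying the identity $\sum_v(\deg v - 2) = -2$ to each non-trivial tree of $F$ yields
\[
n_1 \;=\; 2c' + \sum_{i \geq 3}(i-2)\, n_i \;\geq\; 2c' + H,
\]
and combining with $L + H = m + c'$ gives $L \geq (m + 3c')/2 \geq (m+3)/2$. In particular, since $L$ is an integer and $m \geq 36$, one obtains $L \geq 20$.

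Let $L_B$ denote the number of low-degree vertices lying in $B$; the fact asks for $L_B \geq 6$. Since at most $|A|$ of the low-degree vertices can lie in $A$, we have $L_B \geq L - |A|$. I would finish by a case split on $|A|$. If $|A| \geq 8$, then $m > 3|A|$ forces $m \geq 3|A|+1$, so $L \geq (3|A|+4)/2 \geq |A|+6$ and hence $L_B \geq 6$. If $|A| \leq 7$, then the bound $L \geq 20$ gives $L_B \geq 20 - 7 = 13 \geq 6$.

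There is no serious obstacle beyond calibrating the constants; the main subtlety is that neither hypothesis alone suffices. For small $|A|$ the bound $m > 3|A|$ is too weak on its own, and for large $|A|$ the absolute threshold $m \geq 36$ is too weak on its own, but the two hypotheses dovetail precisely to cover both regimes, which is why the statement bundles them together.
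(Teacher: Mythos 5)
Your proof is correct and follows the paper's approach: lower-bound the number of degree-$\leq 2$ vertices $L$ in the forest (the paper uses the averaging observation that at least half the non-isolated vertices have degree $\leq 2$, and $|V(F)| \geq |E(F)|$; you use the degree-sum identity to get the slightly sharper $L \geq (m+3)/2$), then subtract $|A|$. Your final case split is avoidable, though: the paper combines the two hypotheses in one step, since $|L| \geq m/2$ and $|A| \leq m/3$ give $|L \setminus A| \geq m/2 - m/3 = m/6 \geq 6$ directly.
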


\begin{proof}
  Let $V(F)$ denote the nodes in $F$ that have degree at least $1$.
  Consider the set $L \sse V(F)$ of the ``low'' degree nodes, i.e., the
  degree~$1$ or degree-$2$ nodes in $F$. At least half the nodes in
  $V(F)$ must lie in $\ell$. (Indeed, all nodes in $V(F) - L$ contribute
  degree at least $3$, and the nodes in $\ell$ contribute degree at least
  $1$, and the average degree of nodes in $V(F)$ is strictly less than
  $2$ since it is a forest.) So $|L| \geq |V(F)|/2 \geq |E(F)|/2$. Since
  $|A| \leq |E(F)|/3$, we have that $|L \setminus A| \geq |E(F)|/6 \geq
  6$; these are chosen to be in $S$.
\end{proof}



\section{The Fully Dynamic Case}
\label{sec:fully-d}

\newcommand{\add}{\textsf{add}}
\newcommand{\delete}{\textsf{del}}
\newcommand{\del}{\delete}
\newcommand{\That}{\smash{\widehat{T}}}

We now consider the fully-dynamic case, where the input sequence has
both additions and deletions. Hence each request $\sigma_t$ is either
$(\add, t)$, or $(\delete, t')$ for some $t' < t$.  We assume that each
vertex that is added is a ``new'' vertex, and hence has a new
index. Moreover, this means there is no point to deleting vertices multiple
times, each vertex can be assumed to be deleted at most once. 

In the fully-dynamic case, observe that the process can go on
indefinitely and the metric can be arbitrarily large, so $n$ will denote
some arbitrary instance in time, instead of denoting the size of the
metric as in the previous section.  Let $V_n = \{ t \in [n] \mid
\sigma_t = (\add, t) \}$ be the set of vertices that have appeared until
time $n$. Let $D_n = \{ s \in [n] \mid \exists t \in [n] ~s.t.~ \sigma_t
= (\delete, s) \}$ be the vertices that have been deleted until time $n$;
since each vertex is deleted at most once, this is well-defined.  Let $A_n
= V_n \setminus D_n$ be the ``alive'' vertices at time $n$.

\interfootnotelinepenalty=10000

We will assume that the inter-point distances are specified in the
following particular manner --- this will be convenient for us in the
following analysis. (In Section~\ref{sec:dist-spec}, we argue this does
not change the problem by more than a constant factor.) Let $d_t(\cdot,
\cdot)$ be the distances between the vertices in $V_t$.  If we see
$(\add, t+1)$, we are given the distances from $t+1$ to all vertices in
$A_t \sse V_t$: i.e., to only the alive vertices. The guarantee we have
is that the newly given distances form a metric along with the old
distances, and hence do not violate the triangle inequality. The
distances from $t+1$ to vertices in $D_t$ must be inferred using the
triangle inequality: $d_{t+1}(t+1, s) = \min_{s' \in A_t} (d(t+1,s') +
d_t(s',s))$. Note that the former summand is a new distance given as
input, the second summand is inductively defined.

Let $T_t$ be the tree at time $t$, and $V(T_t)$ denote the set of
vertices in it.  The following lemma is immediate:
\begin{lemma}
  \label{lem:metric}
  The distances $d_t$ satisfy the following properties:
  \begin{OneLiners}
  \item[(a)] The closest distance from $t+1$ to vertices in $V(T_t)$ is to
    some alive vertex; i.e. some vertex in $A_t$.
  \item[(b)] The metric $d_{t+1}$ restricted to $V_t$ is the same as $d_t$.
  \end{OneLiners}
\end{lemma}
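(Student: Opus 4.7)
The plan is to verify both parts directly from the inductive specification of the metric described just before the lemma.

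For part (b), I would observe that the distances among vertices in $V_t$ are fixed at the moment those vertices are added and are never altered subsequently. Going from time $t$ to $t+1$, either $\sigma_{t+1}$ is a deletion, in which case no new distances are introduced at all, or $\sigma_{t+1} = (\add, t+1)$, in which case only distances involving $t+1$ are specified (those to $A_t$ are given as input, and those to $D_t$ are defined by the $\min$ formula). In neither case are distances between two vertices of $V_t$ changed, so $d_{t+1}$ restricted to $V_t \times V_t$ coincides with $d_t$.

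For part (a), let $v^\star \in V(T_t)$ minimize $d_{t+1}(t+1, v)$ over $v \in V(T_t)$. If $v^\star \in A_t$ we are done, so suppose $v^\star \in D_t$. By the inductive definition of the metric,
\[
d_{t+1}(t+1, v^\star) \;=\; \min_{s' \in A_t} \bigl( d(t+1, s') + d_t(s', v^\star) \bigr) \;\geq\; d(t+1, s'_\star),
\]
where $s'_\star \in A_t$ is the minimizer (and we used non-negativity of $d_t$). Since the Steiner tree $T_t$ must span all alive vertices, $s'_\star \in A_t \subseteq V(T_t)$, so $s'_\star$ is a candidate in the minimization defining $v^\star$. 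Combining this with the minimality of $v^\star$ gives $d_{t+1}(t+1, s'_\star) = d_{t+1}(t+1, v^\star)$, so the minimum is also attained at the alive vertex $s'_\star$.

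There is no real obstacle here; both statements are direct consequences of how the distances were specified. The only point that needs a brief justification is the containment $A_t \subseteq V(T_t)$, which holds because any valid Steiner tree at time $t$ must span the alive vertex set $A_t$.
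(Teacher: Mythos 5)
Your proof is correct and is exactly the direct verification that the paper treats as ``immediate'' and omits: part~(b) follows because only distances involving the new vertex $t+1$ are ever introduced, and part~(a) follows by unrolling the $\min$-formula defining distances from $t+1$ to deleted vertices and noting that the minimizing alive vertex $s'_\star$ lies in $A_t \subseteq V(T_t)$. No gaps.
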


The tree $T_t$ at some time $t$ is valid if it uses any vertices in $V_t$,
whether they are alive or dead, but it contains all the alive vertices
$A_t$. (Hence $T_t$ is a Steiner tree on $A_t$, with $D_t$ being the
Steiner vertices.)
Consider the two
cases for request $\sigma_{t+1}$:
\begin{itemize}
\item Case I: $\sigma_{t+1} = (\add, t+1)$. In this case we are now
  given the distances from $t+1$ to all vertices in $A_t$, and hence can
  infer the new distance metric $d_{t+1}(\cdot, \cdot)$. We now must add
  at least one edge from $t+1$ to $V(T_t)$ to get connectivity, and then
  are allowed to make any edge swaps, and also potentially drop some
  deleted vertices from the tree to get tree $T_{t+1}$.
\item Case II: $\sigma_{t+1} = (\delete, s)$. We mark the vertex $s \in
  V(T_t)$ as deleted. We are allowed to make any edge swaps, and also
  potentially drop some deleted vertices from the tree to get tree $T_{t+1}$.
\end{itemize}
Finally, the cost of tree $T_t$ is $\cost(T_t) := \sum_{e \in T_t}
d_t(e)$, the sum of lengths of the edges in $T_t$. We call this tree
$\alpha$-competitive if $\cost(T_t) \leq \alpha\cdot \OPT(A_t)$, i.e.,
it costs not much more than the minimum cost Steiner tree on the alive
vertices. The algorithm is said to be $\alpha$-competitive in the fully-dynamic
model if it maintains a tree that is $\alpha$-competitive at all times,
when the input consists of both additions and deletions.

The main theorem of this section is the following:
\begin{theorem}
  \label{thm:fully-d}
  There is a $4$-competitive algorithm for Steiner tree in the
  fully-dynamic model that, for every $t$, performs at most $O(t)$ edge
  additions and deletions in the first $t$ steps.
\end{theorem}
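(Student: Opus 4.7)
I would analyze the greedy algorithm described in Section~\ref{sec:techniques}. Upon $(\add, t{+}1)$, add the shortest edge from $t{+}1$ to $V(T_t)$ (which goes to some vertex in $A_t$ by Lemma~\ref{lem:metric}(a)); upon $(\delete, s)$, simply mark $s$ deleted. After either event, exhaustively apply two local rules until none applies: (a) \emph{swap}: if there exist $e \in T$ and $f \notin T$ with $d(f) < d(e)$ and $T - e + f$ still a spanning tree of $V(T)$, swap $e$ for $f$; (b) \emph{reduction}: if a deleted vertex $s$ has tree-degree $1$, remove $s$ and its edge; if it has tree-degree $2$ with neighbors $u,v$, remove $s$ and its two edges and add $(u,v)$. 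Termination is clear since each swap strictly decreases $\sum_{e\in T} d(e)$ and each reduction strictly decreases $|V(T)|$.

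\textbf{Competitive ratio.} When the loop halts, $T_t$ is a (local, hence global) minimum spanning tree on $V(T_t)$, and every deleted vertex in $V(T_t)$ has tree-degree $\ge 3$, so every leaf of $T_t$ lies in $A_t$. To show $\cost(T_t) \le 4 \cdot \OPT(A_t)$ I would use a standard Steiner-tree doubling argument: doubling and short-cutting an optimal Steiner tree on $A_t$ gives a spanning tree of $A_t$ of cost $\le 2\,\OPT(A_t)$. Extending this to a spanning tree of $V(T_t)$ costs at most another factor of $2$ by triangle inequality, with the degree-$\ge 3$ property used to charge each kept Steiner vertex to its incident alive subtree; the MST-minimality of $T_t$ on $V(T_t)$ then yields the desired $4$-competitive bound.

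\textbf{Amortized $O(t)$ edge changes.} This is the main technical step, and I would use a potential-function argument extending the one in~\cite{GuGK13}. Concretely, define
\[ \Phi(T) \;=\; \sum_{e \in T}\bigl(1+\lceil \log_2 d(e)\rceil\bigr) \;+\; C \cdot |V(T)| \]
for a sufficiently large constant $C$, and prove: (i) an $(\add,\cdot)$ event raises $\Phi$ by $O(1)$; (ii) a $(\delete,\cdot)$ event does not change $\Phi$; (iii) each swap decreases $\Phi$ by $\Omega(1)$; (iv) each degree-$1$ reduction decreases $\Phi$ by $\Omega(1)$; (v) each degree-$2$ short-circuit decreases $\Phi$ by $\Omega(1)$ -- the $|V(T)|$-term drops by $C$, and while the log-sum can rise (since the new edge of length $\le \ell_1+\ell_2$ satisfies $\lceil \log_2(\ell_1+\ell_2)\rceil \le \max(\lceil\log_2 \ell_1\rceil,\lceil\log_2\ell_2\rceil)+1$), we simultaneously save $2+\lceil\log_2\ell_1\rceil+\lceil\log_2\ell_2\rceil$ from removing the two old edges, so the net change in the log-sum is at most $+1$, absorbed by $-C$. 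Since $\Phi \ge 0$ and is raised by $O(1)$ per time step, the total number of operations -- hence the total number of edge changes in $t$ steps -- is $O(t)$.

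\textbf{Main obstacle.} The most delicate item is (iii): a naive dyadic log-potential drops only on ``big'' swaps with $d(f) \le d(e)/2$, not on the small-improvement swaps where $d(e)/2 < d(f) < d(e)$. This is precisely the difficulty resolved in~\cite{GuGK13} by a more refined potential that additionally tracks the tree-path between the endpoints of each non-tree edge. The main work to be done is to verify that the~\cite{GuGK13} potential, suitably augmented with the $C\cdot|V(T)|$ term, continues to drop on the reduction steps (iv) and (v), where removing a vertex and short-circuiting can reshape the tree-paths appearing in that refined potential in nontrivial ways.
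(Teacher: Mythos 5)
Your algorithm differs from the paper's in a load-bearing way: you propose performing \emph{any} improving swap ($d(f) < d(e)$), whereas the paper performs only \emph{$2$-swaps} ($d(e) \geq 2\,d(f)$) and stops at a $2$-stable tree, invoking Imase--Waxman's Lemma~5 that a $2$-stable extension tree is within a constant of $\MST(A_t)$. This is precisely what dissolves the ``main obstacle'' you flag at the end. Because every swap the paper does saves a full factor of~$2$ in edge length, the simple multiplicative potential $\Phi_n(F) = \prod_{e\in F} d_n(e)$ shrinks by a factor of~$2$ per swap, grows by at most a factor of~$2$ per degree-$2$ short-circuit, and the small-improvement swaps you worry about simply never occur. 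Your version instead converges to the exact MST of $V(T_t)$, for which the naive log-potential does fail, and your appeal to the refined path-tracking potential of~\cite{GuGK13} is left unverified (you correctly identify that reductions reshape tree paths in ways that potential was never designed to handle). The paper avoids that machinery entirely.

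There is a second, independent gap in your accounting. Step~(i), ``an $(\add,\cdot)$ event raises $\Phi$ by $O(1)$,'' is false: the greedy edge $(t,p_t)$ can be arbitrarily long, so $1+\lceil\log_2 d(t,p_t)\rceil$ is unbounded, and $\Phi \geq 0$ alone gives you nothing. The paper never claims a per-step $O(1)$ increase. Instead it controls the \emph{total} contribution of all greedy edges by~\cite[Theorem 5.1]{GuGK13}, $\Phi_n(E_g) \leq 4^{|V_n|}\cdot\Phi_n(\MST(V_n))$, and then needs a matching lower bound $\Phi_n(\That_n) \geq \Phi_n(\MST(V_n))$ to make the telescoping close. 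That lower bound requires the object being tracked to be a \emph{spanning tree on all of $V_n$}, deleted vertices included---which is exactly why the paper introduces the auxiliary red/black tree $\That_t$ that never discards vertices. Your potential is defined on $T_t$, which loses vertices at each reduction, so no comparison to $\MST(V_n)$ is available and the amortization does not close. In short: replace arbitrary swaps with $2$-swaps, track the augmented spanning tree $\That_t$ rather than $T_t$, and use the multiplicative potential with the GuGK13 product bound on both sides---then the count $n_b \leq 2|V_n| + n_d$ falls out and the theorem follows.
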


Before we give the algorithm, let us define $c$-swaps and
$c$-stability. For some Steiner tree $T$ on the terminals in $A_t$,
suppose there exist $e \in E(T)$ and $f \not\in E(T)$ such that (i)~the
graph $T - e + f$ is also a Steiner tree on $A_t$, and (ii)~$d_t(e) \geq
c\cdot d_t(f)$. Then we say that $(e,f)$ is a \emph{valid $c$-swap}, and
performing the valid $(e,f)$ swap means changing the current tree from
$T$ to $T - e +f$. A tree is $c$-stable if there do not exist any valid
$c$-swaps.

Following Imase and Waxman~\cite{IW91}, a tree $T$ with vertex set $V(T)$ is called an
\emph{extension tree} for a set of vertices $S$ if (i) it is a Steiner tree on
 $S$---i.e., $S \sse V(T)$, and (b) all  Steiner vertices in $T$, i.e., vertices in $V(T)\S$
are of degree strictly greater than $2$. Given a Steiner tree $T$ that
is not an extension tree (i.e., $T$ has Steiner vertices of degree $1$ or
$2$), the following operations produce an extension tree $T'$.  For any
degree-$1$ Steiner vertex (i.e., leaf Steiner vertex), delete the vertex and
its incident edge. For any degree-$2$ Steiner vertex $u$ with edges to
$v,w$, delete the vertex $u$ and edges $(u,v), (u,w)$, and add the edge
$(v,w)$. Note that such an operation might create more low-degree vertices:
repeat the process on these vertices until the resulting tree is an
extension tree for $S$.

Our algorithm is the following: 
\begin{itemize}
\item For an addition $\sigma_t = (\add, t)$, attach $t$ to the closest
  vertex $p_t$ from $V(T_{t-1})$. Call the edge $(t,p_t)$ the greedy edge
  for time $t$. By \lref[Lemma]{lem:metric}, the vertex $p_t$ is alive. Now
  perform any valid $2$-swaps until we get a $2$-stable tree.
\item For a deletion $\sigma_t = (\delete, s)$, mark $s$ as a Steiner
  vertex in $T_{t-1}$. Convert this Steiner tree on $A_t = A_{t-1}
  \setminus \{s\}$ to an extension tree as described above. Perform any
  valid $2$-swaps until the tree is $2$-stable. This might create
  low-degree vertices, so repeat these two steps iteratively until we get a $2$-stable
  extension tree on the vertex set~$A_t$. Note that this process will terminate because 
  during edge swaps, we are reducing the cost of the tree, and during conversion to an 
extension tree, we are removing some vertices which will not appear again. 
\end{itemize}

Recall that for a set of vertices $S$, $\OPT(S)$ denotes the cost of the optimal Steiner tree on $S$. Let 
 $\MST(S)$ denote the cost of the minimum spanning tree on $S$. Let 
$\cost(\MST(S))$ denote the cost of this tree. 
The argument about the cost of the tree follows from known results~\cite[Lemma~5]{IW91}:
\begin{theorem}
  \label{thm:iwlem5}
  If $T = (V,E)$ is a $c$-stable extension tree for a set of vertices $S$, then 
  \[ \cost(T) \leq 2c\cdot \cost(\MST(S)) \leq 4c\cdot \OPT(S). \] 
\end{theorem}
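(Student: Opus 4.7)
The theorem reduces to two inequalities, which I would prove separately. The second inequality, $\cost(\MST(S)) \leq 2 \OPT(S)$, is the classical MST-vs-Steiner-tree comparison: I would double the edges of an optimal Steiner tree to form a Eulerian multigraph of weight $2\OPT(S)$, take an Eulerian tour, and shortcut it to a Hamiltonian cycle on $S$ via the triangle inequality; deleting any edge of this cycle gives a spanning tree on $S$ whose weight upper bounds $\cost(\MST(S))$.

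For the main inequality $\cost(T) \leq 2c \cdot \cost(\MST(S))$, the plan is an edge-charging argument that uses both hypotheses crucially. The extension-tree property guarantees that for every $e \in E(T)$, removing $e$ splits $T$ into two subtrees, each of which must contain at least one terminal of $S$: otherwise an all-Steiner subtree would have a degree-$1$ Steiner vertex, contradicting the extension property. Writing the resulting partition of $S$ as $A_e \sqcup B_e$, for any $\MST(S)$-edge $(u,v)$ with $u \in A_e$ and $v \in B_e$ the graph $T - e + (u,v)$ is again a valid Steiner tree on $S$, so $c$-stability yields the per-edge bound $d(e) \leq c \cdot d(u,v)$.

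To turn these per-edge bounds into a global one, I would construct a charging $\phi : E(T) \to E(\MST(S))$ sending each $e$ to some $M$-edge between $A_e$ and $B_e$ with $|\phi^{-1}(f)| \leq 2$ for every $f$. Given such a $\phi$, the rest is immediate: $\cost(T) = \sum_{e} d(e) \leq c \sum_{e} d(\phi(e)) \leq 2c \cdot \cost(\MST(S))$. The main obstacle is the existence of this $2$-to-$1$ charging, which I would derive via a Hall-type condition: for every $F \sse E(\MST(S))$, the set of $T$-edges whose induced partition of $S$ refines the component partition of $\MST(S) - F$ must have size at most $2|F|$. The extension-tree property is what controls the sizes of the Steiner hulls in $T$ of the components of $\MST(S) - F$, which is what ultimately verifies Hall's condition. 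This is precisely the content of \cite[Lemma~5]{IW91}, which I would invoke directly.
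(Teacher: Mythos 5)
The paper gives no proof of this theorem at all: it simply cites it as \cite[Lemma~5]{IW91}, which is exactly what you fall back on for the key $2$-to-$1$ charging step, so your proposal ultimately takes the same route as the paper. Your standalone sketch of the secondary inequality $\cost(\MST(S)) \leq 2\,\OPT(S)$ and of the per-edge swap bound is correct (modulo a reversed refinement direction in the Hall-condition phrasing — you want each component of $\MST(S) - F$ to lie entirely inside $A_e$ or $B_e$, not the other way round), but it does not replace the citation, since the existence of the $2$-to-$1$ charging is precisely what Imase--Waxman establish.
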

This shows that the tree maintained by the algorithm is
$4$-competitive. To prove \lref[Theorem]{thm:fully-d}, it suffices to now
bound the number of edge additions and deletions performed during the algorithm. The following 
lemma follows from the fact that the closest vertex to a newly arriving vertex is one of the 
alive vertices at that time. 

\begin{lemma}
  \label{lem:greedy-edges}
  For any $n$, consider the algorithm after the first $n$ requests. The
  greedy edges added by the algorithm are the same edges that would be
  added by a greedy algorithm running on the subsequence of just the
  additions (and none of the deletions) in the sequence
  $\sigma_{1\cdots n}$.
\end{lemma}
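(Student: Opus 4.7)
The plan is to prove this by induction on $n$, with the induction hypothesis being that the set of greedy edges added by our algorithm through step $n$ equals the set of edges added by the greedy-only algorithm run on the subsequence of additions in $\sigma_{1\cdots n}$. Deletion steps are trivial: our algorithm adds no greedy edge during a deletion, and the greedy-only algorithm does not see the request at all. So the inductive step reduces to an addition $\sigma_t = (\add, t)$, where I must show that the endpoint $p_t \in V(T_{t-1})$ our algorithm attaches $t$ to is the same as the endpoint $p_t' \in V_{t-1}$ that the greedy-only algorithm would choose.

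The key technical step is to verify that $\arg\min_{v \in V(T_{t-1})} d_t(t, v) = \arg\min_{v \in V_{t-1}} d_t(t, v)$ (under consistent tie-breaking). By part (a) of \lref[Lemma]{lem:metric}, the left-hand minimizer lies in $A_{t-1}$. Since our algorithm always maintains $A_{t-1} \subseteq V(T_{t-1})$, we have
\[ \min_{v \in V(T_{t-1})} d_t(t, v) \;=\; \min_{v \in A_{t-1}} d_t(t, v). \]
For the right-hand side, I would use the inductive definition of the metric: for any deleted $s \in D_{t-1}$,
\[ d_t(t, s) \;=\; \min_{s' \in A_{t-1}} \bigl( d_t(t, s') + d_{t-1}(s', s) \bigr) \;\geq\; \min_{s' \in A_{t-1}} d_t(t, s'), \]
so the nearest vertex to $t$ in $V_{t-1}$ is also realized by some alive vertex, and
$\min_{v \in V_{t-1}} d_t(t, v) = \min_{v \in A_{t-1}} d_t(t, v)$. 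Both quantities therefore equal the distance to the same alive vertex, so $p_t = p_t'$, completing the induction.

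The main conceptual obstacle to worry about is that swaps and extension-tree cleanups performed during earlier deletion steps may have removed deleted Steiner vertices from $T_{t-1}$, so that $V(T_{t-1}) \subsetneq V_{t-1}$, and a priori the nearest vertex in the two sets could differ. The triangle-inequality structure of how $d_t$ is defined on deleted vertices rules this out: no deleted vertex can ever strictly beat every alive vertex, so restricting from $V_{t-1}$ to $V(T_{t-1}) \supseteq A_{t-1}$ does not change the minimizer. No other subtleties arise, since swaps and extension-tree operations only change edges among previously-existing vertices and never introduce or alter greedy edges.
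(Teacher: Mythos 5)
Your proof is correct and takes the same approach the paper intends: the paper disposes of this lemma in one sentence (``follows from the fact that the closest vertex to a newly arriving vertex is one of the alive vertices at that time''), and you have simply unpacked that sentence properly. In particular you correctly notice that Lemma~\ref{lem:metric}(a) only covers the minimizer over $V(T_{t-1})$, and that one also needs the analogous statement for $V_{t-1}$ (which you derive directly from the inductive triangle-inequality definition of $d_t$ on deleted vertices); both minima then equal $\min_{v \in A_{t-1}} d_t(t,v)$, and with consistent tie-breaking the greedy endpoints coincide. This is exactly the intended argument, with the small gaps filled in.
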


\begin{corollary}
  \label{clm:g-product}
  For any $n$, let $E_g$ be the set of greedy edges added by the
  algorithm on input sequence $\sigma_{1\cdots n}$. Then
  \[ \prod_{(t,p_t) \in E_g} d_n(t, p_t) = \prod_{(t,p_t) \in E_G}
  d_t(t, p_t) \leq 4^{|V_n|} \cdot \prod_{e \in \MST(V_n)} d_n(e). \]
\end{corollary}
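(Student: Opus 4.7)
The statement has two parts — an equality and an inequality — and I would handle them in turn.

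The equality $\prod d_n(t,p_t)=\prod d_t(t,p_t)$ is immediate. Any greedy edge $(t,p_t)\in E_g$ has both endpoints in $V_t$, since $p_t\in V(T_{t-1})\subseteq V_{t-1}$ and $t\in V_t$. Applying \lref[Lemma]{lem:metric}(b) repeatedly from time $t$ up to time $n$ shows that $d_s(t,p_t)=d_t(t,p_t)$ for every $s\ge t$, and in particular $d_n(t,p_t)=d_t(t,p_t)$. Taking the product over all greedy edges gives the equality.

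For the inequality, my first move is to use \lref[Lemma]{lem:greedy-edges} to reduce to the pure-greedy setting: $E_g$ is exactly the set of edges added by the vanilla online greedy algorithm run on the subsequence of additions in $\sigma_{1\cdots n}$. So it suffices to prove the bound under the assumption that $|V_n|$ vertices $v_1,\dots,v_{|V_n|}$ arrive one at a time and each $v_t$ is attached by an edge to its nearest predecessor, so that $g_t=d_n(t,p_t)=\min_{s<t}d_n(v_t,v_s)$.

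In this pure-greedy setting I would prove the classical Imase--Waxman pairing. Sort the greedy edges in decreasing order $g_{(1)}\ge g_{(2)}\ge\dots\ge g_{(|V_n|-1)}$ and likewise sort the MST edges $m_{(1)}\ge m_{(2)}\ge\dots\ge m_{(|V_n|-1)}$. The target is to show $g_{(j)}\le 2\,m_{(j)}$ for every $j$; multiplying then gives
\[
\prod_{j} g_{(j)} \;\le\; 2^{|V_n|-1}\prod_{j}m_{(j)} \;\le\; 4^{|V_n|}\prod_{e\in\MST(V_n)} d_n(e),
\]
which is the claimed inequality.

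The heart of the proof — and the main obstacle — is the structural claim that underlies the pairing: if we remove the $j$ heaviest greedy edges from the greedy tree, the resulting $j+1$ components are pairwise separated by distance at least $g_{(j)}/2$ in the metric $d_n$. Given this separation, any spanning tree on $V_n$ (in particular $\MST(V_n)$) must contain $j$ edges of length at least $g_{(j)}/2$ to bridge the $j+1$ components, forcing $m_{(j)}\ge g_{(j)}/2$. To prove the separation I would use the greedy invariant: whenever $v_t$ was added by an edge of length $\ell$, every earlier vertex $v_s$ satisfies $d_n(v_t,v_s)\ge\ell$. For any two vertices $u,v$ in distinct components, I would trace the unique greedy-tree path between them, identify among the removed top-$j$ edges along this path the one $(v_{x^\star},p_{x^\star})$ whose ``later'' endpoint $v_{x^\star}$ has maximum arrival time, and then use the greedy invariant applied to $v_{x^\star}$ together with the triangle inequality along the two halves of the path to lower-bound $d_n(u,v)$ by $g_{(j)}/2$. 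This separation argument is exactly the delicate step; once it is in place, the rest of the corollary follows by the two-line counting above.
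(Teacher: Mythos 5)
Your proof of the equality (that $d_n(t,p_t)=d_t(t,p_t)$ via Lemma~\ref{lem:metric}(b)) and your first reduction via Lemma~\ref{lem:greedy-edges} are both correct and match what the paper does. But the core of your argument --- the ``classical Imase--Waxman pairing'' $g_{(j)}\le 2\,m_{(j)}$ for the $j$-th largest greedy and MST edges, and the separation claim that removing the $j$ heaviest greedy edges leaves components pairwise $g_{(j)}/2$-separated --- is false, and this is a genuine gap, not a delicate-but-fixable step. Take dyadic points on $[0,1]$ arriving in bisection order $0,\,1,\,\tfrac12,\,\tfrac14,\,\tfrac34,\dots$ down to spacing $2^{-k}$. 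The first (and largest) greedy edge has length $1$, so $g_{(1)}=1$, while every MST edge has length $2^{-k}$, so $m_{(1)}=2^{-k}$: the ratio $g_{(1)}/m_{(1)}$ is $\Theta(n)$, not $\le 2$. The separation claim fails for the same reason: once you remove the length-$1$ greedy edge, the two components contain interleaved dyadics at distance $2^{-k}\ll g_{(1)}/2$. Indeed, if your pairing were true it would immediately give a geometric sum showing greedy online Steiner tree is $O(1)$-competitive, contradicting the well-known $\Theta(\log n)$ lower bound for that algorithm.

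What actually holds, and what the paper uses, is strictly weaker: only the \emph{product} (equivalently, geometric mean) of the greedy edge lengths is within a $4^{|V_n|}$ factor of the product of the MST edge lengths. This is~\cite[Theorem~5.1]{GuGK13}. The geometric-mean bound survives precisely because it lets a few very long greedy edges be compensated by many greedy edges that are much shorter than the corresponding MST edges --- an averaging that is impossible to capture by any monotone per-rank pairing $g_{(j)}\le c\,m_{(j)}$. So the paper's proof is a citation of that product bound after your same reduction to the insertion-only subsequence; your attempt to reprove it by a per-edge comparison cannot work.
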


\begin{proof}
 This
 follows from~\cite[Theorem~5.1]{GuGK13}, which bounds the product
  of the greedy edge lengths added in the sequence $\sigma_{1\cdots n}$ 
  to the edge lengths of the minimum spanning
  tree of the added vertices $V_n$ -- this result assumes that we are inserting vertices
  only. But the lemma above shows that we can indeed make such an assumption without 
affecting the set of greedy edges which get added. 
\end{proof}

It is now convenient to define a slightly different process, where we
maintain a spanning tree $\That_t$ on all vertices in $V_t$, instead of a
Steiner vertex on terminals in $A_t$. Some of the edges in $\That_t$ will
be colored red, and others black. One invariant will be that deleting
all the red edges in $\That_t$ will leave exactly the tree $T_t$. Hence
the red edges give us a forest, where each tree in this forest 
contains a single vertex from~$T_t$.

Suppose we have inductively defined $\That_t$ thus far. There are four
different operations:
\begin{OneLiners}
\item[(a)] Any greedy edges added to $T_t$ are also added to $\That$ and
  colored black.
\item[(b)] Any swaps done in $T_t$ are also mimicked in
  $\That_t$---observe that these are swaps between black edges.
\item[(c)] If we delete some degree-$1$ vertex from $T_t$, we merely mark
  this edge as red in $\That_t$.
\item[(d)] If we delete a degree-$2$ vertex $u$ in $T_t$, and connect its
  neighbors $(v,w)$ by an edge, in $\That_t$ we also add the new black
  edge $(v,w)$, delete the black edges $(u,v), (u,w)$, and add a red
  edge from $u$ to the closer of $\{v,w\}$.
\end{OneLiners}

Note that all these moves maintain that $\That_t$ is a spanning tree on
the vertices in $V_t$; all edges of $T_t$ are contained in it, and are
colored black. Define the potential of any forest $F$ on the metric
$d_n$ as
\begin{gather}
  \Phi_n(F) := \prod_{e \in E(F)} d_n(e).
\end{gather}
Hence, \lref[Corollary]{clm:g-product} says that $\Phi_n(E_g) \leq
4^{|V_n|} \cdot \Phi_n(\MST(V_n))$. (Notice that the greedy edges $E_g$ do form
a forest--in fact, a spanning tree--on $V_n$.) Let us track how the
potential of the tree $\That_t$ changes.

\begin{lemma}
  \label{lem:pot-count}
  \[ \Phi_n(\That_n) \leq \Phi_n(E_g) \cdot \left( \frac12 \right)^{n_b}
  \cdot 2^{n_d} \] where $n_b$ is the number of $2$-swaps (i.e., number
  of invocations of operation~(b)), and $n_d$ is the number of
  invocations of operation~(d).
\end{lemma}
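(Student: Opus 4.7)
The plan is to prove this by induction on the number of operations performed, tracking the evolution of $\Phi(\That_t)$ under each of the four operations (a)--(d). The key preliminary observation, justifying why we may freely pass back and forth between $d_t$ and $d_n$ for $t \leq n$, is \lref[Lemma]{lem:metric}(b): the metric $d_{t+1}$ restricted to $V_t$ coincides with $d_t$. Consequently, for any edge $e$ between vertices that exist at the time of the operation, $d_t(e) = d_n(e)$, so we may analyze all operations in the single final metric $d_n$.

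Next I would analyze the multiplicative effect of each operation on $\Phi_n(\That)$.
For operation (a), a greedy edge $(t,p_t)$ is added to $\That$ with both endpoints in $V_t$, contributing a factor of $d_n(t,p_t)=d_t(t,p_t)$ to the potential; aggregating over all additions gives a total contribution of $\Phi_n(E_g)$.
For operation (b), a valid $2$-swap removes $e$ and inserts $f$ with $d_t(e) \geq 2\, d_t(f)$, so $\Phi$ is multiplied by $d_n(f)/d_n(e) \leq 1/2$.
Operation (c) merely recolors an existing edge and leaves $\Phi$ unchanged.
The key step is operation (d): we delete the black edges $(u,v)$ and $(u,w)$, insert the new black edge $(v,w)$, and add a red edge from $u$ to the closer of $v,w$. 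Up to color, the net change is that one edge is exchanged: if $d(u,v)\leq d(u,w)$, then $(u,v)$ stays (as red) while $(u,w)$ is replaced by $(v,w)$. By the triangle inequality,
\[
d_n(v,w) \;\leq\; d_n(u,v) + d_n(u,w) \;\leq\; 2\, d_n(u,w),
\]
so $\Phi$ increases by a factor of at most $2$.

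Combining the four cases, after all operations the final potential satisfies
\[
\Phi_n(\That_n) \;\leq\; \Phi_n(E_g) \cdot \Big(\tfrac12\Big)^{n_b} \cdot 2^{n_d},
\]
which is the claim. The mildly delicate point, and the one I would write out most carefully, is operation (d): one must verify that the right edge is retained (the shorter one) so that the triangle-inequality bound is against the \emph{longer} deleted edge $d_n(u,w)$, yielding the factor of $2$ rather than something larger. Everything else is a direct bookkeeping argument once \lref[Lemma]{lem:metric}(b) lets us work in the single metric $d_n$ throughout.
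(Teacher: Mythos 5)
Your proof is correct and follows essentially the same bookkeeping argument as the paper: operation (a) contributes the product of greedy-edge lengths, (b) a factor $\leq 1/2$ per swap, (c) nothing, and (d) a factor $\leq 2$ via the same triangle-inequality bound against the longer of the two deleted edges. The only addition is your explicit appeal to Lemma~\ref{lem:metric}(b) to justify working in $d_n$ throughout, which the paper leaves implicit.
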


\begin{proof}
  The change in the product due to operation~(a) is captured by the
  product of the greedy edges. Operation~(b) causes some edge to be
  replaced by an edge of at most half the length, which accounts for
  $(1/2)^{n_b}$. Operation~(c) does not change the product, only the
  color of an edge. Operation~(d) essentially replaces the longer of
  edges $(u,v), (u,w)$---say the longer one is $(u,w)$---by the edge
  $(v,w)$. By the triangle inequality, $(u,w)$ has length $d_n(v,w) \leq
  d_n(u,v) + d_n(u,w) \leq 2\,d(u,w)$. Hence the product of edge lengths
  increases by at most a factor of $2$. This accounts for $2^{n_d}$.
\end{proof}

Putting \lref[Corollary]{clm:g-product} and \lref[Lemma]{lem:pot-count}
together, and using that $\That_n$ is a spanning tree on $V_n$, we get
\begin{gather}
\label{eqn:key1}
  \frac{\Phi_n(\That_n)}{\Phi_n(\MST(V_n))} \leq 4^{|V_n|} \cdot
  2^{-n_b + n_d}.
\end{gather}

As observed in ~\cite[Lemma~5.2]{GuGK13}, the fact that $\That_n$  is a spanning 
tree on $V_n$  and $\MST(V_n)$ is a minimum spanning tree on $V_n$ implies that 
$\Phi_n(\That_n) \geq \Phi_n(\MST(V_n)).$ Combining this with~(\ref{eqn:key1}), we get
\[ n_b \leq 2\,|V_n| + n_d.  \]

Note that the algorithm performs at most $|V_n|$ edge deletions, since
each execution of operations~(c) and~(d) causes one edge deletion. Also,
each operation~(d) also causes one edge swap (in addition to the edge
deletion), as does an execution of operation~(b). Hence the total number
of swaps is at most
\[ n_b + n_d \leq 2\,|V_n| + 2\,n_d. \] Finally, $n_d = n - |V_n|$,
because there are a total of $n$ requests and $|V_n|$ of them are vertex
additions (so the rest must be deletions). This means the total number
of edge swaps in the first $n$ requests is $2n$, which completes the
proof of \lref[Theorem]{thm:fully-d}.

\subsection{A Word about the Distance Specification}
\label{sec:dist-spec}

Recall that when a new point was added, we specified the distances from
this new point to the old points in a particular fashion. Let us recall
this again. Suppose $d_t(\cdot, \cdot)$ are the current distances
between the vertices in $V_t$.  If we see $(\add, t+1)$, we are given
the distances from $t+1$ to all vertices in $A_t \sse V_t$: i.e., to
\emph{only the alive vertices}. The guarantee we have is that the newly
given distances form a metric along with the old distances, and hence do
not violate the triangle inequality. The distances from $t+1$ to
vertices in $D_t$ must be inferred using the triangle inequality:
$d_{t+1}(t+1, s) = \min_{s' \in A_t} (d(t+1,s') + d_t(s',s))$. Note that
the former summand is a new distance given as input, the second summand
is inductively defined.

Perhaps a more natural model is where we are given distances to all the
previous vertices (both alive and deleted), again subject to the
triangle inequality. We now claim the two models are the same up to
constant factors, and hence it is fine to work with the former
model. Indeed, suppose when we see $(\add, t+1)$, we are told distances
$d'(t+1, x)$ for all $x \in V_t$, and this gives us a metric
$d'_{t+1}(\cdot, \cdot)$ on $V_t$. We could then ignore the distances to
the already-deleted vertices, define $d(t+1, y) := d'(t+1,y)$ for all
alive vertices $y \in A_t$ and extend it by the triangle inequality as
above to get the distances $d_{t+1}(\cdot, \cdot)$ on all of
$V_{t+1}$. Clearly the distances $d_{t+1} \geq d_{t+1}'$, so the cost of
our tree according to the distances $d_{t+1}$ is at least the cost
according to the actual distances $d'_{t+1}$. Moreover, this definition
inductively maintains $d_{t+1}(x,y) = d'_{t+1}(x,y)$ for all $x,y \in
A_{t+1}$, so the cost of the optimal Steiner tree on $A_{t+1}$ using the
actual metric $d'_{t+1}$ is at least half the cost of the MST on
$A_{t+1}$ with respect to $d'_{t+1}$ (and hence also with respect to
$d_{t+1}$). This completes the proof that working in the
distances-specified-to-alive-points only changes the competitive ratio
by a factor of $2$.


\section{Discussion}

Several interesting questions remain unanswered. We do not know how to
get a non-amortized constant competitive algorithm for the fully-dynamic
case which makes $O(1)$ swaps per insertion or deletion. Obtaining
similar results for the Steiner forest problem (even in the amortized
setting, even for insertions only) remains an interesting open problem.

\ifstoc
 \bibliographystyle{abbrv}
 {\bibliography{bibonline}}
\else
 \bibliographystyle{alpha}
 {\small \bibliography{bibonline}}
\fi

\appendix


\end{document}